\newtheorem{open}{Open Problem}
\title{Recognizing Graphs Close to Bipartite Graphs\\with an Application to Colouring Reconfiguration\thanks{This paper received support from EPSRC (EP/K025090/1), London Mathematical Society (41536), the Leverhulme Trust (RPG-2016-258) and Fondation Sciences Math\'ematiques de Paris.
An extended abstract of this paper appeared in the proceedings of MFCS 2017~\cite{BDFJP17-conf}.}}
\author{Marthe Bonamy\inst{1} \and Konrad K. Dabrowski\inst{2} \and Carl Feghali\inst{3} \and\\ Matthew Johnson\inst{2} \and Dani\"el Paulusma\inst{2}}
\institute{
CNRS, LaBRI, Universit\'e de Bordeaux, France\\
\texttt{marthe.bonamy@u-bordeaux.fr},
\and
School of Engineering and Computing Sciences, Durham University, UK\\
\texttt{\{konrad.dabrowski,matthew.johnson2,daniel.paulusma\}@durham.ac.uk},
\and
IRIF \& Universit\'e Paris Diderot, France\\
\texttt{feghali@irif.fr}}
\newcommand{\NP}{{\sf NP}}
\newcommand{\problemdef}[3]{
	\begin{center}
		\begin{boxedminipage}{.99\textwidth}
			\textsc{{#1}}\\[2pt]
			\begin{tabular}{ r p{0.8\textwidth}}
				\textit{~~~~Instance:} & {#2}\\
				\textit{Question:} & {#3}
			\end{tabular}
		\end{boxedminipage}
	\end{center}
}
\newcounter{ctrclaim}[theorem]
\newcommand\displaycase[1]{{\em #1}}
\newcommand{\clm}[1]{\medskip\phantomsection\refstepcounter{ctrclaim}\noindent\displaycase{Claim \thectrclaim. }{\em #1}\\}
\newcommand{\qedllncs}{\qed}
\begin{document}
\maketitle

\begin{abstract}
We continue research into a well-studied family of problems that ask whether the vertices of a graph can be partitioned into sets~$A$ and~$B$, where~$A$ is an independent set and~$B$ induces a graph from some specified graph class~${\cal G}$.
We let~${\cal G}$ be the class of $k$-degenerate graphs.
This problem is known to be polynomial-time solvable if $k=0$ (bipartite graphs) and \NP-complete if $k=1$ (near-bipartite graphs) even for graphs of maximum degree~$4$.
Yang and Yuan [DM, 2006] showed that the $k=1$ case is polynomial-time solvable for graphs of maximum degree~$3$.
This also follows from a result of Catlin and Lai [DM, 1995].
We consider graphs of maximum degree~$k+2$ on~$n$ vertices.
We show how to find~$A$ and~$B$ in~$O(n)$ time for $k=1$, and in~$O(n^2)$ time for $k\geq 2$.
Together, these results provide an algorithmic version of a result of Catlin [JCTB, 1979] and also provide an algorithmic version of a generalization of Brook's Theorem, which was proven in a more general way by Borodin, Kostochka and Toft [DM, 2000] and Matamala [JGT, 2007].
Moreover, the two results enable us to complete the complexity classification of an open problem of Feghali et~al.~[JGT, 2016]: finding a path in the vertex colouring reconfiguration graph between two given $\ell$-colourings of a graph of maximum degree~$k$.

\end{abstract}

\section{Introduction}\label{s-intro}

The {\sc Colouring} problem asks if a given graph is {\em $k$-colourable} for some given integer~$k$, that is, if the vertices of the graph can be coloured with at most~$k$ colours, such that no two adjacent vertices are coloured alike.
This is a central problem in graph theory and well known to be \NP-complete even if $k=3$~\cite{Lo72}.
A stronger property of a graph is that of being {\em $(k-1)$-degenerate}, which is the case when every induced subgraph has a vertex of degree at most~$k-1$: every $(k-1)$-degenerate graph is $k$-colourable, but the converse is not true.

For an arbitrarily large integer~$k$, there exist $k$-degenerate graphs that are not $(k-1)$-degenerate but that can be decomposed into a $p$-degenerate induced subgraph and a $q$-degenerate induced subgraph for two small integers~$p$ and~$q$.
For example, if we take complete bipartite graphs of degree~$k$, then we can let $p=q=0$.
If a $k$-degenerate graph is decomposable in this way, it is not only $(k+1)$-colourable but even $(p+q+2)$-colourable.
This leads to the well-studied problem of identifying graphs whose vertex sets can be partitioned into two sets~$A$ and~$B$ such that~$A$ and~$B$ induce a $p$-degenerate graph and $q$-degenerate graph, respectively; see, for instance,~\cite{Bo76,BG01,KT09,Ma07,Th95,Th01}.
If a graph has such a partition with $p+q=\ell-2$, then we can say that the graph is ``robustly'' $\ell$-colourable.
For the sake of example: every planar graph is $5$-degenerate, which implies that it is $6$-colourable, but we can improve this to $5$-colourable by applying the result of Thomassen~\cite{Th01}, which states that every planar graph can be decomposed into a $0$-degenerate graph and a $3$-degenerate graph, or another result of Thomassen~\cite{Th95}, which states that every planar graph can be decomposed into a $1$-degenerate graph and a $2$-degenerate graph.
So being $5$-colourable is a ``robust'' property of planar graphs (in contrast to being $4$-colourable; there is no~$p$ and~$q$ with $p+q\leq 2$ such that we can guarantee a decomposition of a planar graph into a $p$-degenerate graph and a $q$-degenerate graph; see~\cite{HSW90} for $p=0$, $q=2$ and~\cite{CK69} for $p=q=1$).

\medskip
\noindent
{\bf Research Question.}
We will apply the notion of ``robust'' $k$-colourability to a central problem in the area of graph reconfiguration, that of finding a path between two given $k$-colourings in the $k$-colouring reconfiguration graph~$R_k(G)$ of a graph~$G$.
The graph~$R_k(G)$ has as vertices the $k$-colourings of~$G$ and two $k$-colourings are adjacent if and only if they differ on exactly one vertex of~$G$.
This problem is PSPACE-hard even if $k=4$ and~$G$ is planar bipartite~\cite{BC09}.
In its complexity classification for graphs of maximum degree~$\Delta$~\cite{FJP16} there is one open case $(k,\Delta)$ left.
As argued in~\cite{FJP16}, in order to solve this case we must answer the following research question:

\medskip
\noindent
{\em Is it possible to find in polynomial time a partition $(A,B)$ of the vertex set of a graph~$G$ of maximum degree~$k$, such that~$A$ is $0$-degenerate and~$B$ induces a $(k-2)$-degenerate graph?}

\subsection{Known Existence Results}
We note that in the above question we must {\em find} a partition $(A,B)$ for $p=0$ and $q=k-2$.
This is a different question than deciding whether such a partition {\em exists}.
For the latter question, a number of different results exist in the literature.
We will survey these results below, as they are very insightful for our question, although they do not solve it.

We first note that if $p=0$, then we can take~$q$ as a distance measure to control how ``far'' the graph is from being bipartite.
As every $0$-degenerate graph is an independent set of vertices, checking if the distance is~$0$ is the same as checking bipartiteness, which can be solved in linear time.
Graphs within distance~$1$ from being bipartite are said to be {\em near-bipartite}.
By definition, such a graph has a {\em near-bipartite decomposition}, that is, a partition $(A,B)$ of its vertex set, where~$A$ is an independent set and~$B$ induces a $1$-degenerate graph, or equivalently, a forest.
Deciding whether a graph is near-bipartite is \NP-complete~\cite{BLS98}.

Yang and Yuan~\cite{YY06} proved that the problem remains \NP-complete even for graphs of maximum degree~$4$, but becomes polynomial-time solvable for graphs of maximum degree~$3$.
To prove the latter result, they showed that every connected graph of maximum degree at most~$3$ is near-bipartite except~$K_4$ (we let~$K_k$ denote the complete graph on~$k$ vertices).
This characterization was also shown by Catlin and Lai, who proved that the independent set~$A$ may even be assumed to be maximum.

\begin{theorem}[\cite{CL95}]\label{t-cl95}
The vertex set of every connected graph of maximum degree~$3$ that is not isomorphic to~$K_4$ can be partitioned into two sets~$A$ and~$B$, where~$A$ is a maximum independent set and~$B$ is a forest.
\end{theorem}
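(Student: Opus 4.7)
The plan is to let $A$ be a maximum independent set of $G$ and to verify that $B := V(G) \setminus A$ itself induces a forest; since the theorem asks only for the existence of \emph{some} maximum independent set that works, this will suffice. The first step is routine: by maximality of $A$, every $v \in B$ has at least one neighbour in $A$ (otherwise $A \cup \{v\}$ would be larger), and so at most two neighbours in $B$. Hence $G[B]$ has maximum degree at most $2$, i.e.\ is a disjoint union of paths and cycles. It therefore suffices to rule out cycle-components of $G[B]$.

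The main tool is a swap principle. Suppose for contradiction that a cycle $C = v_1 v_2 \cdots v_\ell v_1$ is a component of $G[B]$, so each $v_i$ has exactly one neighbour $a_i \in A$. For any $I \subseteq V(C)$ that is independent in $C$, the set $(A \setminus \{a_i : v_i \in I\}) \cup I$ is independent in $G$; it is strictly larger than $A$ unless the $a_i$ for $v_i \in I$ are pairwise distinct. Maximality of $A$ therefore forces the map $v_i \mapsto a_i$ to be injective on every independent subset of $C$. For $\ell \ge 4$ the cycle $C$ already admits independent subsets of size $\lfloor \ell/2 \rfloor \ge 2$, and by comparing two cyclic shifts of such a maximum independent subset one can pin down a vertex $a_i$ that, after the swap, has no neighbour in the updated independent set and can hence be added, contradicting maximality.

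The delicate case is $\ell = 3$, since for a triangle the swap principle is vacuous on singletons. I would do a case analysis on how the $a_i$ coincide. If two of them are equal, say $a_1 = a_2 = a$, then $a$ is joined to both $v_1$ and $v_2$; together with the triangle $v_1v_2v_3$ and the observation that every vertex in $\{a,v_1,v_2,v_3\}$ already has degree $3$ in $G$ (so no further edges can leave the set), these four vertices induce a connected component isomorphic to $K_4$, whence connectedness of $G$ forces $G = K_4$, contradicting the hypothesis. If all three $a_i$ are distinct, I would perform the same-size swap $A' = (A \setminus \{a_1\}) \cup \{v_1\}$, observe that $A'$ is still a maximum independent set, and trace the new cycle-or-path structure that $\{a_1, v_2, v_3\}$ creates in $G[V \setminus A']$ so as to reduce either to the previous subcase or to an augmenting addition.

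The main obstacle is exactly this triangle case, because the naive swap produces only a same-size independent set and one must carefully follow the effect of a single swap, using both the maximum-degree-$3$ hypothesis and connectedness of $G$, to corner the graph into $K_4$ as the only obstruction. Longer cycles in $G[B]$ are comparatively easy: larger independent subsets of $C$ leave the swap principle enough room to produce an augmenting step directly.
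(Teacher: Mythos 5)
There is a genuine gap, and it is located exactly where you put the weight of the argument: the claim you set out to prove --- that for an \emph{arbitrary} maximum independent set $A$ the complement $B=V(G)\setminus A$ induces a forest --- is false, so no amount of swapping can produce the contradiction you are after. A concrete counterexample: take a triangle $v_1v_2v_3$, attach a pendant neighbour $a_i$ to each $v_i$, and add one further vertex $w$ adjacent to $a_1,a_2,a_3$. This graph is connected, has maximum degree~$3$ and is not $K_4$; its independence number is~$3$, and $A=\{a_1,a_2,a_3\}$ is a maximum independent set whose complement $\{v_1,v_2,v_3,w\}$ contains the triangle as a component of $G[B]$. This is precisely your ``delicate case'' with the three $a_i$ distinct, and here the swap $A'=(A\setminus\{a_1\})\cup\{v_1\}$ neither recreates the coincidence subcase nor yields an augmenting addition --- it simply produces a \emph{different} maximum independent set whose complement happens to be a forest. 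So there is no contradiction to be derived, because the statement being contradicted is not true. The same phenomenon defeats your ``comparatively easy'' case $\ell\geq 4$: take a $4$-cycle $v_1v_2v_3v_4$ with pendant vertices $a_1,\dots,a_4$, add $w_1$ adjacent to $a_1,a_2$, $w_2$ adjacent to $a_3,a_4$, and the edge $w_1w_2$; then $\{a_1,a_2,a_3,a_4\}$ is a maximum independent set (the graph has a perfect matching of size~$5$, and a size-$5$ independent set is easily excluded), yet its complement contains the $4$-cycle. In that example every swap of an independent subset of the cycle yields a same-size independent set that cannot be extended, so the cyclic-shift argument you sketch cannot ``pin down'' an addable vertex.

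The missing idea is that the theorem asserts the existence of \emph{some} maximum independent set with forest complement, and proving it requires committing to a carefully chosen one (for instance, an extremal choice among all maximum independent sets, such as one minimizing the number of edges induced by the complement, and then analysing what a cycle in the complement would force under that extremal choice), rather than arguing about an arbitrary one and hoping maximality alone rules out cycles. Note also that the paper itself does not prove this statement; it is quoted from Catlin and Lai~\cite{CL95}, and the substance of their proof lies exactly in this careful selection/modification of the independent set, which your proposal leaves unaddressed.
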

Theorem~\ref{t-cl95} generalizes the $k=3$ case of an earlier result of Catlin.

\begin{theorem}[\cite{Catlin79}]\label{t-ca79}
For every integer $k\geq 3$, the vertex set of every connected graph of maximum degree~$k$ that is not isomorphic to~$K_{k+1}$ can be partitioned into two sets~$A$ and~$B$, where~$A$ is a maximum independent set and~$B$ induces a graph that does not contain a~$K_k$.
\end{theorem}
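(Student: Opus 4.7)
The plan is to start with any maximum independent set $A$ of $G$, set $B:=V(G)\setminus A$, and show via an exchange argument that $A$ can be chosen so that $G[B]$ is $K_k$-free.

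Assume for contradiction that some $T=\{v_1,\ldots,v_k\}\subseteq B$ induces a $K_k$. Each $v_i$ has $k-1$ neighbours inside $T$ and $\Delta(G)\le k$, so $v_i$ has at most one neighbour outside $T$; and maximality of $A$ forces $v_i$ to have at least one neighbour in $A$. Thus every $v_i$ has a unique neighbour $a_i\in A$ and no other neighbour in $B$. If $a_1=\cdots=a_k=a$, then $a$ has degree $k$ in $T\cup\{a\}$ and no further neighbour, so the connected component of $a$ in $G$ is the $K_{k+1}$ on $T\cup\{a\}$; by connectedness $G\cong K_{k+1}$, contradicting the hypothesis. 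Hence not all $a_i$ are equal.

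Now refine the choice: among maximum independent sets, pick $A$ to minimise the number of $K_k$-subgraphs in $G[B]$. Perform the swap $A':=(A\setminus\{a_1\})\cup\{v_1\}$; this is still a maximum independent set because $v_1$'s only $A$-neighbour is $a_1$. The clique $T$ disappears from $B':=V(G)\setminus A'$, and because $v_1$'s only $B$-neighbours lie in $T$, the set $T$ is the \emph{only} $K_k$ of $B$ destroyed. A newly created $K_k$ of $B'$ must contain $a_1$, and would therefore require $|N(a_1)|=k$ together with $N(a_1)\setminus\{v_1\}$ inducing a $K_{k-1}$. The same analysis applies with any $a_i$ in place of $a_1$; if some $a_i$ fails this stringent condition, the corresponding swap strictly reduces the $K_k$-count, contradicting the choice of $A$.

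The main obstacle is the remaining ``saturated'' case, where for every $i$ the vertex $a_i$ has degree exactly $k$ and $N(a_i)\setminus\{v_i\}$ induces a $K_{k-1}$; here each swap trades one $K_k$ for another and the secondary objective alone is insufficient. To close the argument I would iterate the clique-neighbour analysis on the newly created clique $\{a_1\}\cup(N(a_1)\setminus\{v_1\})$ inside $(A',B')$, obtaining a chain of $K_k$'s whose external attachments are all degree-$k$ vertices with fully prescribed local neighbourhoods. Exploiting connectedness of $G$ in the spirit of the proof of Brooks' theorem, I expect this rigidity to propagate through the component and ultimately force $G\cong K_{k+1}$, again contradicting the hypothesis. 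I anticipate that the technical weight of the proof lies entirely in this final configurational step rather than in the initial exchange setup.
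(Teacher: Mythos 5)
Your argument is correct and clean up to the point you yourself flag, but the flagged point is a genuine gap, not a finishing detail: the ``saturated'' case is exactly where the whole difficulty of Theorem~\ref{t-ca79} lives, and your proposal does not prove it. Concretely, once every $a_i$ has degree exactly~$k$ and $N(a_i)\setminus\{v_i\}$ induces a $K_{k-1}$, the potential ``number of $K_k$'s in $G[B]$'' is stationary under every available swap, and the iteration you sketch has no progress measure: the swap $A\mapsto A'=(A\setminus\{a_1\})\cup\{v_1\}$ is reversible (swapping back recreates $T$), so the chain of cliques $T,\ \{a_1\}\cup(N(a_1)\setminus\{v_1\}),\ \ldots$ can simply oscillate or cycle among a bounded set of configurations without ever ``propagating through the component''; nothing in what you have written converts connectedness into the conclusion $G\cong K_{k+1}$. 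The assertion that the rigidity forces $K_{k+1}$ is only an expectation, and it is not obviously true that a local Brooks-style analysis of one chain suffices --- one either needs a more refined (e.g.\ lexicographic or weighted) potential that strictly decreases in the saturated case, or a global structural argument in the style of Catlin's original proof. As it stands, the proposal is an exchange-argument setup plus a statement of intent for the hard case.

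For comparison: the paper offers no proof of Theorem~\ref{t-ca79} to measure against --- it is quoted from Catlin~\cite{Catlin79}, with the generalizations of Matamala (Theorem~\ref{t-ma07}) and Borodin, Kostochka and Toft (Theorem~\ref{t-bkt00}) also cited rather than proved. Note also that the paper's own contribution deliberately drops the ``maximum independent set'' requirement (Theorems~\ref{thm:cubic} and~\ref{thm:k2-deg} give decompositions where $B$ is $(k-2)$-degenerate, hence $K_k$-free), precisely because finding a maximum independent set is \NP-hard already for cubic graphs; so an exchange argument anchored on a maximum independent set, even if completed, would establish existence but not anything algorithmic in the sense pursued in Sections~\ref{s-3} and~\ref{s-maxdegree}.
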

Matamala generalized both Theorem~\ref{t-cl95} and Theorem~\ref{t-ca79}.

\begin{theorem}[\cite{Ma07}]\label{t-ma07}
For every three integers $k\geq 3$ and $p,q\geq 0$ with $p+q=k-2$, the vertex set of every connected graph of maximum degree~$k$ that is not isomorphic to~$K_{k+1}$ can be partitioned into two sets~$A$ and~$B$, where~$A$ induces a $p$-degenerate subgraph of maximum size and~$B$ induces a $q$-degenerate subgraph.
\end{theorem}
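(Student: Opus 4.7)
\medskip
\noindent
\textbf{Proof plan.}
My plan is to run the natural ``maximum set plus exchange'' argument. Choose an induced $p$-degenerate subgraph $A \subseteq V(G)$ of maximum cardinality (any single vertex is $p$-degenerate, so such an $A$ exists), and set $B := V(G) \setminus A$. I want to show that $G[B]$ must be $q$-degenerate. Assume for contradiction that it is not, and let $H$ be the (non-empty) $(q+1)$-core of $G[B]$, obtained by iteratively deleting vertices of degree at most $q$; by construction, every vertex of $H$ has at least $q+1$ neighbours in $H$.

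The first step is to pin down the local structure of $H$. For each $v \in H$, the inequalities $\deg_G(v) \leq k$ and $|N(v)\cap H| \geq q+1$ give $|N(v)\cap A| \leq k - (q+1) = p+1$. Moreover, if $v$ had at most $p$ neighbours in $A$, then placing $v$ first in the reverse degeneracy ordering of $G[A]$ would witness that $G[A \cup \{v\}]$ is still $p$-degenerate, contradicting the maximality of $|A|$. Hence every $v \in H$ has \emph{exactly} $p+1$ neighbours in $A$, and this saturates the degree budget: $\deg_G(v) = k$, $|N(v) \cap H| = q+1$, and $v$ has no neighbour in $B\setminus H$. In particular $G[H]$ is $(q+1)$-regular.

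The second step is to turn this rigidity into a contradiction, in the spirit of Brooks' theorem. Pick any $v \in H$ and any $a \in N(v) \cap A$, and consider the swap $A' := (A \setminus \{a\}) \cup \{v\}$, which has the same size as $A$. If $G[A']$ is $p$-degenerate, then $A'$ is another optimal choice, and I can try to iterate the argument (now using that $a$ plays the role $v$ previously played, with its neighbours in $H$). By following these swaps along chains of vertices in $H$ and of their neighbours in $A$ — a Kempe-style alternating walk — one either succeeds in augmenting $A$ (contradicting maximality) or forces the local neighbourhood $\{v\} \cup N(v)$ to be a clique on $k+1$ vertices, all of whose vertices have degree exactly $k$ in $G$. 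Since $G$ is connected, such a clique can only be all of $G$, contradicting $G \not\cong K_{k+1}$.

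The main obstacle is the exchange step. Degeneracy is a global property of an induced subgraph, not merely a local one, so verifying that a swap preserves $p$-degeneracy requires controlling a fresh degeneracy ordering after each modification. The tricky bookkeeping is to show that whenever an individual swap fails, the failure is witnessed by an almost-clique structure that can be extended through $H$ to build the forbidden $K_{k+1}$; this is where the Brooks-type case analysis lies, and it is also where the hypotheses $p+q = k-2$ and $G \not\cong K_{k+1}$ are genuinely used.
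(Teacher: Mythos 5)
There is a genuine gap, and it is in the overall framing rather than in the bookkeeping you flag at the end. (Note also that the paper does not prove this statement; it quotes it from Matamala~[Ma07], so your attempt has to be judged against the known proofs.) Your plan is to fix an \emph{arbitrary} maximum $p$-degenerate set~$A$ and derive a contradiction from the assumption that $G[B]$ is not $q$-degenerate. That statement is false: not every maximum $p$-degenerate set has a $q$-degenerate complement, so the contradiction you are chasing does not exist. A concrete counterexample with $k=3$, $p=0$, $q=1$: take a triangle $xyz$ and attach a pendant vertex $x',y',z'$ to each of $x,y,z$. This graph is connected, subcubic and not $K_4$; the set $A=\{x',y',z'\}$ is a maximum independent set, yet $B=\{x,y,z\}$ induces a cycle. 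Your Step~1 analysis is consistent with this example (every vertex of the core $H=\{x,y,z\}$ has degree $k$, exactly $p+1$ neighbours in $A$ and $q+1$ in $H$), but your Step~2 dichotomy fails: no augmentation of $A$ is possible (the independence number is $3$) and there is no $K_4$ anywhere, so neither branch of ``augment $A$ or exhibit $K_{k+1}$'' can occur even though your contradiction hypothesis holds. The swap $A'=(A\setminus\{x'\})\cup\{x\}$ simply produces a \emph{different} maximum set whose complement is a forest --- which proves the theorem for this graph but contradicts nothing, exposing that the theorem is an existence statement about a well-chosen maximum $A$, not a statement about all of them.

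To repair the argument you must build the choice of $A$ into the proof: e.g.\ among all maximum $p$-degenerate sets choose one that additionally minimizes a potential (the number of edges of $G[B]$, or a similar weighted quantity, as in the proofs of Catlin, Borodin--Kostochka--Toft and Matamala), and then show that if $G[B]$ still fails to be $q$-degenerate, a swap either strictly improves the potential or forces a $K_{k+1}$. Without such an extremal anchor, your ``iterate the swaps along chains'' step has no termination measure and no contradiction to land on: a successful swap need not decrease anything (moving $v$ into $A$ deletes $q+1$ edges from $B$, but the evicted vertex $a$ may bring up to $k-1$ edges back into $B$), so the process can wander among maximum sets indefinitely. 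This missing extremal choice and the accompanying potential/Kempe-chain analysis is precisely the substance of Matamala's proof, not a detail that can be deferred.
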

Before Theorem~\ref{t-ma07} appeared, Borodin, Kostochka and Toft proved a more general result, except that the property of the first set having maximum size is not assumed.
We present a simpler version of their result and refer to~\cite{BKT00} for full details.

\begin{theorem}[\cite{BKT00}]\label{t-bkt00}
For all integers $k\geq 3$, $s \geq 2$, and $p_1,\ldots,p_s\geq 0$ such that $p_1+\cdots+p_s=k-s$, the vertex set of every connected graph of maximum degree~$k$ that is not isomorphic to~$K_{k+1}$ can be partitioned into sets~$A_1,\ldots,A_s$, where~$A_i$ induces a $p_i$-degenerate subgraph for $i\in\{1,\ldots,s\}$.
\end{theorem}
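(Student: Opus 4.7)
The plan is to adapt the extremal / potential-minimisation argument from Lovász's proof of Brooks' theorem, which also underpins Theorem~\ref{t-ma07}. Let $G$ be a connected graph of maximum degree $k$ with $G \neq K_{k+1}$, and let $p_1,\ldots,p_s \geq 0$ satisfy $\sum_i p_i = k-s$, so that $\sum_i (p_i+1) = k$; this equality between ``total capacity'' and the maximum degree is the key structural fact. Among all ordered partitions $(A_1,\ldots,A_s)$ of $V(G)$ I would pick one minimising
$$\Phi(A_1,\ldots,A_s) \;=\; \sum_{i=1}^s\sum_{v\in A_i}\max\{d_{G[A_i]}(v)-p_i,\,0\},$$
breaking ties lexicographically, e.g.\ by preferring partitions with fewer vertices of excess degree. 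If $\Phi=0$, then each $G[A_i]$ has maximum degree at most $p_i$ and is therefore $p_i$-degenerate, and we are done; so the task is to rule out $\Phi>0$.

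Suppose for contradiction that some $v\in A_i$ satisfies $d_{G[A_i]}(v) \geq p_i+1$. Since $\deg_G(v)\leq k = \sum_j(p_j+1)$, either there exists $j\neq i$ with $d_{G[A_j]}(v)\leq p_j$ (and then moving $v$ to $A_j$ strictly decreases $\Phi$, contradicting minimality), or equality holds in every class: $\deg_G(v)=k$, $d_{G[A_i]}(v)=p_i+1$, and $d_{G[A_j]}(v)=p_j$ for all $j\neq i$. Call such a vertex \emph{tight}. In the tight case any single swap is potential-neutral, and one is forced into a Lovász-type chain argument: build an alternating walk $v=v_0,v_1,v_2,\ldots$ in which $v_{t+1}$ is a same-class neighbour of $v_t$ that is itself tight, then reassign classes in a cascading fashion along the walk. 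If the chain ever reaches a non-tight vertex, the cascade reduces $\Phi$, contradicting the choice of partition; otherwise every vertex met is tight (hence has degree exactly $k$) and, by connectedness, iterating reveals that $G$ is $k$-regular with neighbourhoods that lock together, forcing $G\cong K_{k+1}$ and contradicting the hypothesis.

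The hard part is implementing the chain argument with \emph{non-uniform} capacities $p_j$. In Brooks' setting all classes are symmetric, but here moving a vertex from $A_i$ to $A_j$ shifts the ``budget'' asymmetrically, so the potential must be refined (for instance, by weighting excess degree in class $i$ by $1/(p_i+1)$, or by augmenting $\Phi$ with a lexicographic count of bad induced subgraphs) to guarantee strict progress under every admissible swap. A second technical point is the rigid local structure around a tight vertex: its neighbours split into $s$ blocks of sizes $p_1,\ldots,p_{i-1},p_i+1,p_{i+1},\ldots,p_s$ according to class membership, and this rigidity has to be propagated through the chain to conclude $G=K_{k+1}$. One way to tame the bookkeeping would be to first establish the $s=2$ case directly (effectively reproving Theorem~\ref{t-ma07} without the maximality clause), and then induct on $s$ by applying the $s=2$ case to the subgraph induced on the union of two classes, which localises the chain argument to a two-class setting where the asymmetry between $p_i$ and $p_j$ is the only source of complication.
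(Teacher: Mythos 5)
The paper does not actually prove Theorem~\ref{t-bkt00}: it is quoted with a citation to Borodin, Kostochka and Toft, so your attempt can only be judged on its own merits. Judged that way, it has a genuine gap at its core. Your potential $\Phi$ vanishes exactly when every class $A_i$ induces a graph of \emph{maximum degree} at most $p_i$, which is a strictly stronger conclusion than $p_i$-degeneracy, and under the boundary hypothesis $p_1+\cdots+p_s=k-s$ that stronger conclusion is false in general. Take $k=3$, $s=2$, $(p_1,p_2)=(0,1)$ and let $G$ be the triangular prism (the graph in \figurename~\ref{fig:prism-claw}). Every independent set of the prism has at most two vertices, and deleting any two nonadjacent vertices removes at most six of the nine edges, leaving at least three edges on four vertices; a graph of maximum degree at most~$1$ on four vertices has at most two edges. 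So the prism admits no partition into an independent set and a graph of maximum degree at most~$1$, even though it is near-bipartite (the black vertices in the figure induce a path, which is $1$-degenerate but has a vertex of degree~$2$). Hence for the prism every partition has $\Phi\geq 1$ while $G\not\cong K_4$, so the dichotomy you are aiming for --- ``either some move decreases $\Phi$ or $G\cong K_{k+1}$'' --- is false, and no amount of reweighting or cascading along chains of tight vertices can rescue it, because the target $\Phi=0$ is unattainable. What your non-tight case does yield is essentially Lov\'asz's partition lemma, which requires $p_1+\cdots+p_s\geq k-s+1$; at the value $k-s$ the proof must track degeneracy itself (removable vertices and orderings, the variable-degeneracy induction of Borodin--Kostochka--Toft, or a Gallai-tree analysis of a minimal counterexample), not within-class degrees.

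A secondary remark: your closing suggestion to reduce to $s=2$ is sound in principle, since once one has a partition into a $p_1$-degenerate part and a $\bigl(\sum_{i\geq 2}(p_i+1)-1\bigr)$-degenerate part, a degeneracy order of the second part can be split greedily (as in Lemma~\ref{lem:degenerate}) into $p_2$-,\ldots,$p_s$-degenerate classes. But this only relocates the difficulty: the $s=2$ case is precisely the Brooks-type statement (Theorem~\ref{t-ma07} without the maximality clause), and your proposed route to it is the same maximum-degree potential argument that the prism example defeats.
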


Brooks' Theorem~\cite{Br41} states that every graph~$G$ with maximum degree~$k\geq 2$ is $k$-colourable unless~$G$ is a complete graph or an odd cycle.
Recall that every $(k-1)$-degenerate graph is $k$-colourable.
Hence Theorems~\ref{t-ma07} and~\ref{t-bkt00}, together with the trivial case $k=2$, generalize Brooks' Theorem.

By choosing $p=0$ in Theorem~\ref{t-ma07} we obtain the following special case, which implies that every connected graph of maximum degree~$k$ except~$K_{k+1}$ is within distance~$k-2$ from being bipartite.

\begin{theorem}\label{t-special}
For every integer $k\geq 3$, the vertex set of every connected graph of maximum degree~$k$ that is not isomorphic to~$K_{k+1}$ can be partitioned into two sets~$A$ and~$B$, where~$A$ is a maximum independent set and~$B$ induces a $(k-2)$-degenerate subgraph.
\end{theorem}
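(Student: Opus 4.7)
The plan is to derive this as the immediate specialization $p=0$, $q=k-2$ of Matamala's theorem (Theorem~\ref{t-ma07}), which we are allowed to assume. First I would check that the parameters are admissible: since $k\geq 3$, the choice $p=0$ and $q=k-2$ satisfies $p,q\geq 0$ and $p+q=k-2$, so Theorem~\ref{t-ma07} applies to every connected graph $G$ of maximum degree $k$ that is not isomorphic to $K_{k+1}$.

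Invoking Theorem~\ref{t-ma07} with these parameters produces a partition $(A,B)$ of $V(G)$ such that $G[A]$ is a $0$-degenerate induced subgraph of maximum size and $G[B]$ is $(k-2)$-degenerate. The only step that requires even a moment of thought is translating the first condition: a graph is $0$-degenerate precisely when every induced subgraph has a vertex of degree at most $0$, which in turn happens if and only if the graph has no edges at all. Hence $G[A]$ being $0$-degenerate is equivalent to $A$ being an independent set, and $G[A]$ having maximum size among $0$-degenerate induced subgraphs is equivalent to $A$ being a maximum independent set of $G$. This yields exactly the partition claimed in Theorem~\ref{t-special}.

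There is no genuine obstacle here: the statement is presented in the excerpt as the $p=0$ special case of Theorem~\ref{t-ma07}, and the only content of the proof is the elementary observation identifying $0$-degenerate induced subgraphs with independent sets. Accordingly, the proof is essentially a one-line derivation once Theorem~\ref{t-ma07} is in hand.
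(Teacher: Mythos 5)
Your proposal is correct and matches the paper exactly: Theorem~\ref{t-special} is obtained there by setting $p=0$ (hence $q=k-2$) in Theorem~\ref{t-ma07} and observing that a $0$-degenerate induced subgraph of maximum size is precisely a maximum independent set. Nothing further is needed.
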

Theorem~\ref{t-special} only guarantees that the existence of a desired partition $(A,B)$ can be tested in polynomial time.
It does not tell us how to find such a partition.
Obtaining an algorithmic version of Theorem~\ref{t-special} corresponds to our research question.
We cannot hope to keep the condition that~$A$ is a maximum independent set, as this would require solving an \NP-complete problem: {\sc Independent Set} for connected cubic graphs~\cite{GJS76}.
Before giving our results we first survey some other algorithmic results.

\subsection{Known Algorithmic Results}

\noindent
{\bf Special Graph Classes.}
As discussed, Yang and Yuan~\cite{YY06} proved that recognizing near-bipartite graphs is polynomial-time solvable for graphs of maximum degree~$k$ when $k\leq 3$ and \NP-complete when $k\geq 4$.
They also proved that recognizing near-bipartite graphs of diameter~$k$ is polynomial-time solvable when $k\leq 2$ and \NP-complete when $k\geq 4$.
Recently, we solved their missing case by proving that the problem is \NP-complete for graphs of diameter~$3$~\cite{BDFJP17c}.
Brandst\"adt et~al.~\cite{BBKNP13} proved that recognizing near-bipartite perfect graphs is \NP-complete.
We also proved that recognizing near-bipartite graphs is \NP-complete for line graphs of maximum degree~$4$~\cite{BDFJP17b}.

Borodin and Glebov~\cite{BG01} showed that every planar graph of girth at least~$5$ is near-bipartite (see~\cite{KT09} for an extension of this result).
Dross, Montassier and Pinlou~\cite{DMP16} asked whether every triangle-free planar graph is near-bipartite.
In fact they proved that if this is not the case, then the problem of recognizing near-bipartite graphs is \NP-complete for triangle-free planar graphs.
Their construction can be easily modified to prove that the problem of recognizing near-bipartite graphs is \NP-complete for planar graphs.\footnote{The \NP-hardness reduction in~\cite{DMP16} uses a minimal triangle-free planar graph~$G$ that is not near-bipartite, and it is not known whether such graphs exist.
If we remove the triangle-free condition, we can replace~$G$ by~$K_4$.}

\medskip
\noindent
{\bf Minimum Independent Feedback Vertex Sets.}
The problem of finding a decomposition into an independent set~$A$ and a forest~$B$ where the size of~$A$ is minimum has also been studied.
In this context~$A$ is said to be an {\em independent feedback vertex set}.
Computing a minimum independent feedback vertex set has been shown to be \NP-hard even for planar bipartite graphs of maximum degree~$4$, but linear-time solvable for graphs of bounded treewidth, chordal graphs and $P_4$-free graphs~\cite{TIZ15}
(it was already known that near-bipartite $P_4$-free graphs can be recognized in linear time~\cite{BBKNP13}).
Recently, we extended the polynomial-time result from~\cite{TIZ15} for $P_4$-free graphs to $P_5$-free graphs~\cite{BDFJP17b}.
We also gave a polynomial-time algorithm for computing a minimum independent feedback vertex set of a graph of diameter~$2$~\cite{BDFJP17c}.

The problem of computing small independent feedback vertex sets has also been studied from the perspective of parameterized complexity.
In this setting, the size of~$A$ is taken as the parameter.
Misra et~al.~\cite{MPRS12} gave the first FPT algorithm for this problem, which was later improved by Agrawal et~al.~\cite{AGSS16}.

\medskip
\noindent
{\bf Problem Variants.}
The {\sc Induced Forest $2$-Partition} problem is closely related to the problem of recognizing near-bipartite graphs.
It asks whether the vertex set of a given graph can be decomposed into two disjoint sets~$A$ and~$B$, where both~$A$ and~$B$ induce forests.
Wu, Yuan and Zhao~\cite{WYZ96} proved that {\sc Induced Forest $2$-Partition} is \NP-complete for graphs of maximum degree~$5$ and polynomial-time solvable for graphs of maximum degree at most~$4$.
The problem variant where the maximum degree of one of the two induced forests is bounded by some constant has also been studied, in particular from a structural point of view (see, for instance,~\cite{DMP16}).

A similar problem, known as {\sc Dominating Induced Matching}, asks whether the vertex set of a graph can be partitioned into an independent set and an induced matching (a set of isolated edges) and was shown by Grinstead et~al.~\cite{GSSH93} to be \NP-complete.
Brandst{\"a}dt, Le and Szymczak~\cite{BLS98} proved \NP-completeness of another closely related problem, that of deciding whether the vertex set of a given graph can be decomposed into an independent set and a tree.
As trees, induced matchings and forests are $2$-colourable, these two problems and that of recognizing near-bipartite graphs can be seen as restricted variants of the {\sc $3$-Colouring} problem.
This problem is well known to be \NP-complete~\cite{Lo73}.
However, the \NP-hardness result of Brandst\"adt~et~al.~\cite{BBKNP13} for perfect graphs shows that there are hereditary graph classes on which the complexities of recognizing near-bipartite graphs and {\sc $3$-Colouring} do not coincide, as {\sc $3$-Colouring} (or even {\sc $k$-Colouring} with~$k$ part of the input~\cite{GLS84}) is polynomial-time solvable for perfect graphs.

A $3$-colouring of a graph is acyclic if {\em every} two colour classes induce a forest.
We observe that every graph with an acyclic $3$-colouring is near-bipartite, but the reverse is not necessarily true.
Kostochka~\cite{Ko78} proved that the corresponding decision problem {\sc Acyclic $3$-Colouring} is \NP-complete.
Later, Ochem~\cite{Oc05} showed that {\sc Acyclic $3$-Colouring} is \NP-complete even for planar bipartite graphs of maximum degree~$4$.
As every bipartite graph is near-bipartite, this result implies that there are hereditary graph classes on which the complexities of recognizing near-bipartite graphs and {\sc Acyclic $3$-Colouring} do not coincide.

\medskip
\noindent
{\bf Generalizations.}
For a fixed graph class~${\cal G}$ (that is,~${\cal G}$ is not part of the input), Brandst{\"a}dt, Le and Szymczak~\cite{BLS98} considered the following more general problem:

\problemdef{\sc Stable(${\cal G}$)}
{A graph $G=(V,E)$.}{Can~$V$ be decomposed into two disjoint sets~$A$ and~$B$, where~$A$ is an independent set and~$B$ induces a graph in~${\cal G}$?}
Note that {\sc Stable(${\cal G}$)} is equivalent to {\sc $3$-Colouring} if we choose~${\cal G}$ to be the class of bipartite graphs.
If we choose~${\cal G}$ to be the class of $(k-2)$-degenerate graphs, then we obtain the decision version of the problem we consider in this paper.

If~${\cal G}$ is the class of complete graphs, then {\sc Stable(${\cal G}$)} is the problem of recognizing split graphs, and this problem can be solved in polynomial time.
Brandst{\"a}dt, Le and Szymczak~\cite{BLS98} proved that {\sc Stable(${\cal G}$)} is \NP-complete when~${\cal G}$ is the class of trees or the class of trivially perfect graphs, and polynomial-time solvable when~${\cal G}$ is the class of co-bipartite graphs, the class of split graphs, or the class of threshold graphs.
Moreover, {\sc Stable(${\cal G}$)} has also been shown to be \NP-complete when~${\cal G}$ is the class of triangle-free graphs~\cite{CC96}, the class of $P_4$-free graphs~\cite{HL00}, the class of graphs of maximum degree~$1$~\cite{MP95}, or, more generally, a class of graphs that has any additive hereditary property not equal to or divisible by the property of being edgeless~\cite{KS97}, whereas it is also polynomial-time solvable if~${\cal G}$ is the class of complete bipartite graphs~\cite{FHKM03} (see~\cite{BHLL05} for a faster algorithm).
The {\sc Stable(${\cal G}$)} problem has also been studied for hereditary graph classes~${\cal G}$ with subfactorial or factorial speed~\cite{DLS15,Lo05} (the speed of a graph class is the function that given an integer~$n$ returns the number of labelled graphs on~$n$ vertices in the class).

By relaxing the condition on the set~$A$ being independent we obtain the more general problem of $({\cal G}_1,{\cal G}_2)$-{\sc Recognition}, which asks whether the vertex set of a graph can be decomposed into disjoint sets~$A$ and~$B$, such that~$A$ induces a graph in~${\cal G}_1$ and~$B$ induces a graph in~${\cal G}_2$.
For instance, if~${\cal G}_1$ is the class of cliques and~${\cal G}_2$ is the class of disjoint unions of cliques, then the $({\cal G}_1,{\cal G}_2)$-{\sc Recognition} problem is equivalent to recognizing unipolar graphs (see~\cite{MY15} for a quadratic algorithm).
Generalizing {\sc Stable(${\cal G}$)} can also lead to a family of transversal problems, such as {\sc Feedback Vertex Set}.
However, such generalizations are beyond the scope of our paper.

\subsection{Our Results}

In Section~\ref{s-3}, we consider near-bipartite decompositions of \emph{subcubic} graphs (that is, graphs of maximum degree at most~$3$).
Recall that by Theorem~\ref{t-cl95} the only connected subcubic graph that is not near-bipartite is~$K_4$ (see also~\cite{YY06}) and so near-bipartite subcubic graphs can be recognized in polynomial time.
However, neither the proof of Theorem~\ref{t-cl95} in~\cite{CL95} nor the proof of this fact in~\cite{YY06} leads to a linear-time algorithm for finding the desired partition $(A,B)$.
As mentioned, in the case of Theorem~\ref{t-cl95}, this would in fact require solving an \NP-complete problem: {\sc Independent Set} for cubic graphs~\cite{GJS76}.
We give an $O(n)$-time algorithm that finds a near-bipartite decomposition of any subcubic graph with no component isomorphic to~$K_4$.

We say a partition $(A,B)$ of the vertex set of a graph is a {\em $k$-degenerate decomposition} if~$A$ is independent and~$B$ induces a $(k-2)$-degenerate graph, so Section~\ref{s-3} is concerned with $3$-degenerate decompositions of graphs of maximum degree~$3$.
In Section~\ref{s-maxdegree}, we consider, more generally, $k$-degenerate decompositions of graphs of maximum degree at most~$k$ for any $k \geq 3$.
By Theorem~\ref{t-special}, the only connected graph with maximum degree~$k$ that does not have a $k$-degenerate decomposition is~$K_{k+1}$.
As mentioned, Theorem~\ref{t-special} does not imply a polynomial-time algorithm for finding such a decomposition.
We give an $O(n^2)$-time algorithm to find such a decomposition for any $k \geq 3$ (in contrast with the $O(n)$-time algorithm in Section~\ref{s-3} for the special case when $k=3$).

Our results in Sections~\ref{s-3} and~\ref{s-maxdegree} provide an algorithmic version of Theorem~\ref{t-special} and, as Theorem~\ref{t-special} generalizes Theorem~\ref{t-ca79}, they also imply an algorithmic version of Theorem~\ref{t-ca79}.

In Section~\ref{s-hard} we prove that the problem of deciding whether a graph of maximum degree~$2k-2$ has a $k$-degenerate decomposition is \NP-complete.
We do this by adapting the proof of the aforementioned result of Yang and Yuan~\cite{YY06}, which states that recognizing near-bipartite graphs of maximum degree~$4$ is \NP-complete (the $k=3$ case).
In Section~\ref{s-reconfig} we apply our algorithms from Sections~\ref{s-3} and~\ref{s-maxdegree} to completely settle the complexity classification of the graph colouring reconfiguration problem considered in~\cite{FJP16}.
Finally, in Section~\ref{s-con}, we give directions for future work.

\section{Linear Time for Graphs of Maximum Degree at Most~$3$}\label{s-3}

In this section we prove the following result.

\begin{theorem}\label{thm:cubic}
Let~$G$ be a subcubic graph on~$n$ vertices, with no component isomorphic to~$K_4$.
Then a near-bipartite decomposition of~$G$ can be found in~$O(n)$ time.
\end{theorem}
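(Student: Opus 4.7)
The plan is to process each connected component of $G$ independently and then take the union of the resulting decompositions. Let $C$ be a connected component; since $G$ has no component isomorphic to $K_4$, we have $C \neq K_4$, and Theorem~\ref{t-cl95} guarantees that a near-bipartite decomposition of $C$ exists. The task is to find one in time linear in $|V(C)|$.

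I would begin by testing bipartiteness of $C$ via BFS in $O(|V(C)|)$ time. If $C$ is bipartite, setting $A$ to be one side of the bipartition makes $B$ independent (a fortiori a forest) and the component is done. Otherwise, $C$ contains an odd cycle, and I would identify a short one together with a DFS spanning tree $T$ in a single linear-time DFS. Using $T$, I would then process the vertices of $C$ in post-order, placing each vertex $v$ greedily: $v$ goes into $A$ whenever none of its already-processed neighbors lies in $A$, and into $B$ otherwise. When $v$ is placed in $B$, I would verify on the fly that adding $v$ does not close a cycle among the currently assigned $B$-vertices, which can be checked via a union-find structure on the $B$-forest in amortized near-constant time.

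When a conflict does arise, namely $v$ has a neighbor already in $A$ \emph{and} placing $v$ in $B$ would close a $B$-cycle, I would apply a bounded-size local repair: swap a constant-size collection of assignments in the immediate neighborhood of $v$. Since $v$ has degree at most $3$, only a constant number of local configurations need to be analyzed, and the hypothesis $C \neq K_4$ ensures that at least one of them succeeds. In the degenerate cases where $C$ contains a vertex of degree at most $2$, this vertex serves as a convenient starting point (root of the DFS) so that the sweep only ever sees a locally ``open'' frontier and no repair is required.

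The main obstacle is to prove that these local repairs always succeed, and that their total amortized cost is linear rather than quadratic. For correctness, the key observation is that a fully ``locked'' configuration around $v$ would force $v$ together with three of its neighbors to induce a structure indistinguishable from $K_4$, contradicting $C \neq K_4$; this is the algorithmic counterpart of the structural argument of~\cite{CL95}. For the runtime, I would maintain a potential function (for instance, the number of already-assigned vertices all of whose neighbors are already assigned) whose monotone decrease under each repair bounds the total number of repairs by $O(|V(C)|)$. Summing over all components yields the claimed $O(n)$ bound.
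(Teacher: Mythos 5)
The crux of your argument is the claim that whenever the greedy sweep gets stuck (the current vertex $v$ has a neighbour in $A$ and adding $v$ to $B$ closes a $B$-cycle), a swap of a \emph{constant-size} collection of assignments near $v$ always succeeds, because a ``fully locked'' configuration would force a $K_4$. This is exactly the hard part of the theorem, and as stated it is not true: the obstruction is a cycle in $B$, which is a global object. The cycle closed by $v$ may be arbitrarily long, so nothing in the constant-radius neighbourhood of $v$ witnesses a $K_4$; a typical stuck configuration is simply $v$ with one neighbour in $A$ and two neighbours lying in the same $B$-tree, joined by a long path. Repair attempts cascade: moving $v$'s $A$-neighbour into $B$ may itself close a $B$-cycle through far-away vertices, and breaking the long $B$-path requires moving some path vertex into $A$, which can conflict with \emph{its} $A$-neighbours, and so on. The existence result of Catlin--Lai (Theorem~\ref{t-cl95}) guarantees that some valid partition exists, but it gives no bound on how far a repair must propagate, and your potential-function argument presupposes the unproven fact that each repair touches $O(1)$ vertices. (A smaller point: union-find gives $O(n\,\alpha(n))$, not the claimed $O(n)$, though this could be engineered away if the main issue were resolved.)

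The paper's proof is organised quite differently precisely to avoid this trap: it never extends a partial colouring forward across an arbitrary frontier. Instead it applies nine vertex-removal/graph-modification rules until the graph is empty and then undoes them in reverse, extending a full near-bipartite colouring of the smaller graph by a constant-size local change each time. Making this work requires eliminating a whole zoo of local obstructions before the extension step -- low-degree vertices, diamonds, false twins, triangular prisms, triangles (handled by contracting and adding an edge), claw centres (handled by deleting four vertices and adding three new edges), and the special graphs $H_1,H_2,H_3$ -- and the edge-adding rules must additionally be shown not to create a $K_4$. The fact that this case analysis is unavoidable is evidence that the obstructions to a local fix are richer than ``$K_4$ nearby''; your proposal compresses all of that into one unproved sentence, so the proof has a genuine gap.
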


\begin{proof}
We will repeatedly apply a set of \emph{rules} to~$G$.
Each rule takes constant time to apply and after each application of a rule, the resulting graph contains fewer vertices.
The rules are applied until the empty graph is obtained.
We then reconstruct~$G$ from the empty graph by working through the rules applied in reverse order.
As we rebuild~$G$ in this way, we find a near-bipartite decomposition of each obtained graph.
We do this by describing how to extend, in constant time, a near-bipartite decomposition of a graph before some rule is undone to a near-bipartite decomposition of the resulting graph after that rule is undone.
If we can do this then we say that the rule is {\em safe}.
We conclude that the total running time of the algorithm is~$O(n)$.
It only remains to describe the rules, show that it takes constant time to do and undo each of them and prove that they are safe.

We need the following terminology.
The {\em claw} is the graph with vertices $u,v_1,v_2,v_3$ and edges~$uv_1$, $uv_2$,~$uv_3$; the vertex~$u$ is the {\em centre} of the claw (see \figurename~\ref{fig:prism-claw}).
The {\em triangular prism} is the graph obtained from two triangles on vertices $u_1,u_2,u_3$ and $v_1,v_2,v_3$, respectively, by adding the edges~$u_iv_i$ for $i \in \{1,2,3\}$ (see \figurename~\ref{fig:prism-claw}).
Two vertices are {\em false twins} if they have the same neighbourhood (note that such vertices must be non-adjacent).

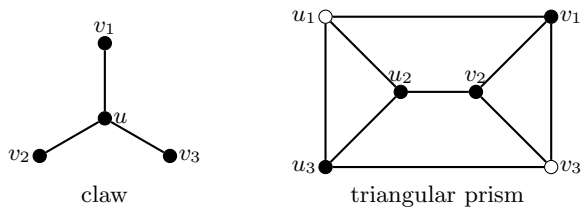
\begin{figure}
\tikzstyle{vertex}=[circle,draw=black, fill=black, minimum size=5pt, inner sep=1pt]
\tikzstyle{edge} =[draw,-,black,>=triangle 90]
\begin{center}
\begin{tabular}{cc}
\begin{minipage}{0.33\textwidth}
\begin{center}
\begin{tikzpicture}
   \foreach \pos/\name / \label / \posn / \dist  in {
{(0,0)/u/u/{right}/0},
{(90:1)/v1/v_1/{above}/0},
{(210:1)/v2/v_2/{left}/0},
{(330:1)/v3/v_3/{right}/0}}
      { \node[vertex] (\name) at \pos {};
       \node [\posn=\dist] at (\name) {$\label$};
       }

\foreach \source/ \dest  in {u/v1, u/v2, u/v3}
       \path[edge, black,  thick] (\source) --  (\dest);     
\end{tikzpicture}
\end{center}
\end{minipage}
&
\begin{minipage}{0.33\textwidth}
\begin{center}
\begin{tikzpicture}
   \foreach \pos/\name / \label / \posn / \dist  in {
{(1,1)/u2/u_2/{above}/0},
{(0,0)/u3/u_3/{left}/0},
{(3,2)/v1/v_1/{right}/0},
{(2,1)/v2/v_2/{above}/0}}
      { \node[vertex] (\name) at \pos {};
       \node [\posn=\dist] at (\name) {$\label$};
       }

   \foreach \pos/\name / \label / \posn / \dist  in {
{(0,2)/u1/u_1/{left}/0},
{(3,0)/v3/v_3/{right}/0}}
      { \node[vertex, fill=white] (\name) at \pos {};
       \node [\posn=\dist] at (\name) {$\label$};
       }

\foreach \source/ \dest  in {u1/u2, u1/u3, u2/u3, v1/v2, v1/v3, v2/v3, u1/v1, u2/v2, u3/v3}
       \path[edge, black,  thick] (\source) --  (\dest);     
\end{tikzpicture}
\end{center}
\end{minipage}\\
claw & triangular prism
\end{tabular}
\end{center}
\caption{The claw and the triangular prism.
A near-bipartite decomposition of the triangular prism is indicated: the white vertices form an independent set and the black vertices induce a forest.}
\label{fig:prism-claw}
\end{figure}

Let~$u$ be an arbitrary vertex of~$G$.
Our choice of~$u$ as an arbitrary vertex implies that~$u$ can be found in constant time.
We then use the first of the following rules that is applicable.

\begin{enumerate}[\bf Rule 1.]
\item \label{r-1}If~$u$ has degree at most~$2$, then remove~$u$.
\item \label{r-2}If there is a vertex~$v$ of degree at most~$2$ that is at distance at most~$3$ from~$u$, then remove~$v$.
\item \label{r-3}If~$G$ contains an induced diamond~$D$ whose vertices are at distance at most~$3$ from~$u$, then remove the vertices of~$D$.
\item \label{r-4}If there is a pair of false twins $u_1$, $u_2$ each at distance at most~$2$ from~$u$, then remove $u_1$, $u_2$ and their common neighbours (note that $u\in \{u_1,u_2\}$ is possible).
\item \label{r-5}If~$u$ is in a connected component that is a triangular prism~$P$, then remove the vertices of~$P$.
\item \label{r-6}If Rules~\ref{r-1}--\ref{r-5} do not apply but~$u$ is in a triangle~$T$, then the neighbours of the vertices in~$T$ that are outside~$T$ are pairwise distinct (since there is no induced diamond) and at least two them, which we denote by~$x'$, $y'$, are non-adjacent (otherwise~$u$ belongs to a triangular prism).
Remove the vertices of~$T$ and add an edge between~$x'$ and~$y'$.
\item \label{r-7}If~$u$ is the centre of an induced claw but has a neighbour~$v$ that belongs to a triangle, then apply one of the Rules~\ref{r-1}--\ref{r-6} on~$v$.
\item \label{r-8}If the graph induced by the vertices at distance at most~$3$ from~$u$ contains the graph~$H_1$, $H_2$ or~$H_3$, depicted in \figurename~\ref{fig:rule8}, with the vertex~$u$ in the position shown in the figure, then remove the vertices of this graph~$H_i$.
\item \label{r-9}If Rules~\ref{r-1}--\ref{r-8} do not apply but~$u$ is the centre of an induced claw and its three neighbours~$u_1$, $u_2$,~$u_3$ are also centres of induced claws, then remove~$u$, $u_1$, $u_2$,~$u_3$ and for $i\in\{1,2,3\}$ add an edge joining the two neighbours of~$u_i$ distinct from~$u$ and denote it by~$e_i$; we say that such an edge is {\em new} (note that such neighbours of two distinct~$u_i$ and~$u_j$ may overlap).
\end{enumerate}

\begin{figure}
\tikzstyle{vertex}=[circle,draw=black, fill=black, minimum size=5pt, inner sep=1pt]
\tikzstyle{edge} =[draw,-,black,>=triangle 90]
\begin{center}
\begin{tabular}{ccc}
\begin{minipage}{0.3\textwidth}
\begin{center}
\begin{tikzpicture}
       
    \begin{scope}[yshift=-1cm]   
   \foreach \pos/\name / \label / \posn / \dist in {   
    {(7,0)/v/u/{right}/0},
    {(6.5,-1)/v1/u_1/{right}/0},
    {(7.5,-1)/v2/u_2/{right}/0}, 
    {(6,-2)/t1/v_1/{right}/0},
     {(7,-3)/t4/w/{right}/0}}
       { \node[vertex] (\name) at \pos {};
       \node [\posn=\dist] at (\name) {$\label$};
       }

   \foreach \pos/\name / \label / \posn / \dist in {   
    {(7,-2)/t2/v_2/{right}/0},
    {(8,-2)/t3/v_3/{right}/0}}
       { \node[vertex, fill=white] (\name) at \pos {};
       \node [\posn=\dist] at (\name) {$\label$};
       }

       \path[edge, black,  thick] (t1) to [bend right=30] (t3);
\foreach \source/ \dest  in {v/v1,  v/v2, v1/t1, v1/t2, v2/t2, v2/t3, t1/t4, t2/t4, t3/t4}
       \path[edge, black,  thick] (\source) --  (\dest);

\end{scope}
\end{tikzpicture}
\end{center}
\end{minipage}
&
\begin{minipage}{0.3\textwidth}
\begin{center}
\begin{tikzpicture}

       \begin{scope}[xshift=3.5cm, yshift=-1cm]   
   \foreach \pos/\name / \label / \posn / \dist in {   
    {(7,0)/v/u/{right}/0},
    {(6.,-1)/v1/u_1/{right}/0},
    {(7,-1)/v2/u_2/{right}/0},
    {(7.5,-2)/t3//{right}/0},
    {(8.5,-2)/t4//{right}/0}}
       { \node[vertex] (\name) at \pos {};
       \node [\posn=\dist] at (\name) {$\label$};
       }

   \foreach \pos/\name / \label / \posn / \dist in {   
    {(8,-1)/v3/u_3/{right}/0},    
    {(5.5,-2)/t1//{right}/0},
    {(6.5,-2)/t2//{right}/0}}
       { \node[vertex, fill=white] (\name) at \pos {};
       \node [\posn=\dist] at (\name) {$\label$};
       }

       \path[edge, black,  thick] (t1) to [bend right=30] (t3);
       \path[edge, black,  thick] (t2) to [bend right=30] (t4);
       \path[edge, black,  thick] (t1) to [bend right=50] (t4);

\foreach \source/ \dest  in {v/v1,  v/v2, v/v3, v1/t1, v1/t2, v2/t2, v2/t3, v3/t3, v3/t4}
       \path[edge, black,  thick] (\source) --  (\dest);

\end{scope}
\end{tikzpicture}
\end{center}
\end{minipage}
&
\begin{minipage}{0.3\textwidth}
\begin{center}
\begin{tikzpicture}
\begin{scope}[xshift=7cm, yshift=-1cm]
   \foreach \pos/\name / \label / \posn / \dist in {   
    {(7,0)/v/u/{right}/0},
    {(6,-2)/t1//{right}/0},
    {(7,-2)/t2//{right}/0},
    {(8,-2)/t3//{right}/0},
    {(7,-3)/t4//{right}/0}}
       { \node[vertex] (\name) at \pos {};
       \node [\posn=\dist] at (\name) {$\label$};
       }

   \foreach \pos/\name / \label / \posn / \dist in {   
    {(6.,-1)/v1/u_1/{right}/0},
    {(7,-1)/v2/u_2/{right}/0},
    {(8,-1)/v3/u_3/{right}/0}}
       { \node[vertex, fill=white] (\name) at \pos {};
       \node [\posn=\dist] at (\name) {$\label$};
       }

\foreach \source/ \dest  in {v/v1,  v/v2, v/v3, v1/t1, v1/t2, v2/t2, v2/t3, v3/t1, v3/t3, t1/t4, t2/t4, t3/t4}
       \path[edge, black,  thick] (\source) --  (\dest);
       
       \end{scope}
       
\end{tikzpicture}
\end{center}
\end{minipage}\\
\\
$H_1$ & $H_2$ & $H_3$
\end{tabular}
\end{center}
\caption{The graphs used in Rule~\ref{r-8}.
A near-bipartite decomposition of each is indicated: the white vertices form an independent set and the black vertices induce a forest.}\label{fig:rule8}
\end{figure}
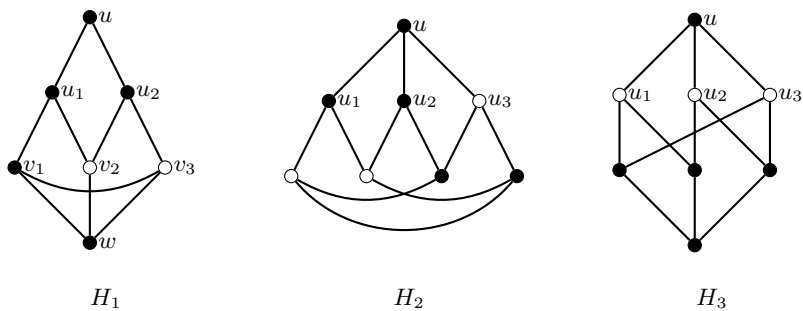

\noindent
Let us show that at least one of the rules is always applicable.
Suppose that, on the contrary, there is a vertex~$u$ of a subcubic graph for which no rule applies.
Then~$u$ and its neighbours each have degree~$3$ (Rules~\ref{r-1} and~\ref{r-2}) and so each either belongs to a triangle or is the centre of an induced claw.
By Rule~\ref{r-6}, $u$ must be the centre of an induced claw and therefore, by Rule~\ref{r-7}, the same is also true for each neighbour of~$u$.
This implies that Rule~\ref{r-9} applies, a contradiction.

Because~$G$ is subcubic, each of these rules takes constant time to verify and process.
In particular, in some rules we need to detect some induced subgraph of constant size that contains~$u$ or replace~$u$ by some other vertex~$v$.
In all such cases we need to explore a set of vertices of distance at most~$4$ from~$u$.
As~$G$ is subcubic, this set has size at most $1+3+3^2+3^3+3^4=121$, so we can indeed do this in constant time.

\medskip
\noindent
It is clear that, as claimed, the application of a rule reduces the number of vertices and that if we repeatedly choose an arbitrary vertex~$u$ and apply a rule, we eventually obtain the empty graph.
We now consider undoing the applied rules in reverse order to rebuild~$G$.
As this is done, we will irrevocably {\em colour} vertices with colour~$1$ or~$2$ in such a way that the vertices coloured~$1$ will form an independent set and the vertices coloured~$2$ will induce a forest.
Thus a rule is safe if this colouring can be extended whenever that rule is undone.
When we reach~$G$, the final colouring will correspond to the required near-bipartite decomposition.

We must prove each rule is safe.
At each step of reconstructing~$G$, we refer to the graph before a rule is undone as the \emph{prior graph} and to the graph after that rule is undone as the \emph{subsequent graph}.	
Note that the application of any of the Rules~\ref{r-1}--\ref{r-9} again yields a subcubic graph.
By the result of Yang and Yuan~\cite{YY06}, every connected subcubic graph is near-bipartite, apart from~$K_4$.
So we need to ensure that an application of a rule does not create a~$K_4$.
This cannot happen when we remove vertices, but we will need to consider it for Rules~\ref{r-6} and~\ref{r-9}.

\begin{figure}
\tikzstyle{vertex}=[circle,draw=black, fill=black, minimum size=5pt, inner sep=1pt]
\tikzstyle{edge} =[draw,-,black,>=triangle 90]
\begin{center}
\begin{tabular}{cc}
\begin{minipage}{0.45\textwidth}
\begin{center}
\begin{tikzpicture}
       
    \begin{scope}[yshift=-1cm]   
   \foreach \pos/\name / \label / \posn / \dist  in {
{(-1,0)/x/x/{above}/0},
{(0,1)/w/w/{above}/0},
{(1,0)/y/y/{above}/0},
{(0,-1)/v/v/{below}/0}}
      { \node[vertex] (\name) at \pos {};
       \node [\posn=\dist] at (\name) {$\label$};
       }
       
   \foreach \pos/\name / \label / \posn / \dist  in {
{(-2,0)/x'/x'/{left}/0},
{(2,0)/y'/y'/{right}/0}}
      { \node[vertex, fill=white] (\name) at \pos {};
       \node [\posn=\dist] at (\name) {$\label$};
       }

\foreach \source/ \dest  in {w/x, x/v, v/y, y/w, w/v}
       \path[edge, black,  thick] (\source) --  (\dest);     
       
\foreach \source/ \dest  in {x/x', y/y'}
       \path[edge, black!50!white,  thick, dashed] (\source) --  (\dest);

\end{scope}
\end{tikzpicture}
\end{center}
\end{minipage}
&
\begin{minipage}{0.45\textwidth}
\begin{center}
\begin{tikzpicture}
\begin{scope}[xshift=-2cm]
   \foreach \pos/\name / \label / \posn / \dist in {{(6.8,-1.5)/n2/y/{left}/0}, {(7.2,-0.8)/n3/x/{right}/0}, 
{(7.6,-1.5)/n5/u/{right}/0}}
       { \node[vertex] (\name) at \pos {};
       \node [\posn=\dist] at (\name) {$\label$};
       }
       
\foreach \source/ \dest  in {n2/n3, n2/n5, n3/n5}
       \path[edge, black,  thick] (\source) --  (\dest);

   \foreach \pos/\name / \label / \posn / \dist in {{(6.3,-2)/n1/y'/{left}/0},  
    {(7.2,0)/n4/x'/{right}/0},  {(8.1,-2)/n6/u'/{right}/0}}
       { \node[vertex, fill=white] (\name) at \pos {};
       \node [\posn=\dist] at (\name) {$\label$};
       }

\foreach \source/ \dest  in {n1/n2,  n3/n4, n5/n6}
       \path[edge, black!50!white,  thick, dashed] (\source) --  (\dest);

\end{scope}

\end{tikzpicture}
\end{center}
\end{minipage}\\
diamond & triangle
\end{tabular}
\end{center}
\caption{The diamond and triangle (solid edges and vertices) together with their neighbourhoods in a cubic graph.}
\label{fig:1}
\end{figure}
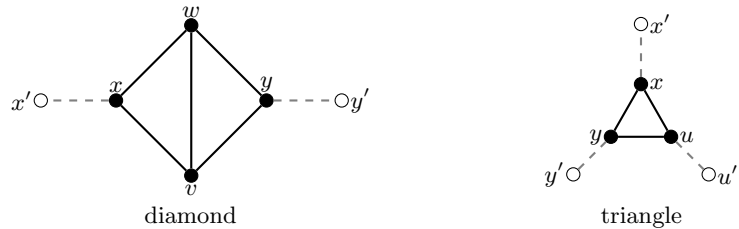

\clm{\label{clm:r1-r5-safe}Rules~\ref{r-1}--\ref{r-5} are safe.}
In Rules~\ref{r-1}--\ref{r-5} we only delete vertices.
Rule~\ref{r-1} is safe since if both neighbours of~$u$ are coloured~$2$, then~$u$ can be coloured~$1$; otherwise~$u$ can be coloured~$2$.
Similarly, we see that Rule~\ref{r-2} is safe.
To see that Rule~\ref{r-3} is safe, let~$D$ be the diamond with vertex labels as illustrated in \figurename~\ref{fig:1}, where~$u$ is one of $v,w,x,y$.
If~$x'$ and~$y'$ are coloured~$2$, we colour~$x$ and~$y$ with~$1$ and~$v$ and~$w$ with~$2$.
Otherwise we colour~$v$ with~$1$ and~$x$, $y$ and~$w$ with~$2$.
We now show that Rule~\ref{r-4} is safe.
Let~$u_1$ and~$u_2$ be false twins (at distance at most~$2$ from~$u$).
As~$G$ is subcubic, every vertex in~$N(u_1)$ with a neighbour in~$N(u_1)$ has no neighbours outside $N(u_1)\cup \{u_1,u_2\}$ and every vertex in~$N(u_1)$ with no neighbour in~$N(u_1)$ has at most one neighbour not equal to~$u_1$ or~$u_2$.
Moreover, as~$G$ is subcubic, $N(u_1)$ contains no cycle.
Hence we can always colour~$u_1$, $u_2$ with~$1$ and the vertices of~$N(u_1)$ with~$2$ regardless of the colours of vertices outside $N(u_1) \cup \{u_1, u_2\}$.
Indeed, every vertex of~$N(u_1)$ will have at most one neighbour that is not coloured~$1$, so cannot be in a cycle of vertices coloured~$2$ in the subsequent graph.
Rule~\ref{r-5} is also safe since~$P$ is $3$-regular and hence would be a component of our subsequent graph, so we can colour its vertices by assigning colour~$1$ to exactly one vertex from each of the two triangles and colour~$2$ to its other vertices (see \figurename~\ref{fig:prism-claw}).
This completes the proof of Claim~\ref{clm:r1-r5-safe}.

\clm{\label{clm:r6-r7-safe}Rules~\ref{r-6} and~\ref{r-7} are safe.}
First, let us demonstrate that Rule~\ref{r-6} is safe.
If~$x'$ and~$y'$ are contained in a~$K_4$ of the prior graph, then the subsequent graph contains a diamond whose vertices are at distance at most~$3$ from~$u$.
This contradicts Rule~\ref{r-3}.
Let~$T$ be the triangle with vertex labels as illustrated in \figurename~\ref{fig:1}.
Suppose~$x'$, $y'$ and~$u'$ are coloured~$2$.
Then we colour~$u$ with~$1$ and~$x$ and~$y$ with~$2$.
The vertices in the subsequent graph with colour~$2$ still induce a forest, as we have replaced an edge in the forest by a path on four vertices.
Suppose~$x'$ and~$y'$ are coloured~$2$ and~$u'$ is coloured~$1$.
Then we colour~$x$ with~$1$, and~$y$ and~$u$ with~$2$.
Otherwise, since~$x'$ and~$y'$ are joined by an edge in the prior graph, we may assume that~$x'$ has colour~$1$ and~$y'$ has colour~$2$.
In this case we can colour~$y$ with~$1$, and~$x$ and~$u$ with~$2$.
This completes the proof that Rule~\ref{r-6} is safe.
Since Rules~\ref{r-1}--\ref{r-6} are safe, it follows that Rule~\ref{r-7} is also safe.
This completes the proof of Claim~\ref{clm:r6-r7-safe}.

\clm{\label{clm:r8-safe}Rule~\ref{r-8} is safe.}
We now show that Rule~\ref{r-8} is safe.
Suppose~$u$ is contained in~$H_1$.
We use the vertex labels from \figurename~\ref{fig:rule8}.
As Rule~\ref{r-1} could not be applied, we find that~$u$ has a third neighbour~$u_3$ distinct from~$u_1$ and~$u_2$.
Regardless of whether~$u_3$ is coloured~$1$ or~$2$, we colour~$u$, $u_1$, $u_2$, $v_1$,~$w$ with~$2$ and~$v_2$, $v_3$ with~$1$ to obtain a near-bipartite decomposition of~$G$.
We can also readily colour the vertices of~$H_2$ or~$H_3$ should~$u$ be contained in one of them (note that since~$H_2$ and~$H_3$ are $3$-regular, these graphs can only appear as components in our subsequent graph).
This completes the proof of Claim~\ref{clm:r8-safe}.

\clm{\label{clm:r9-safe}Rule~\ref{r-9} is safe.}
Suppose that the prior graph contains fewer than three new edges.
Then we may assume without loss of generality that $e_1 = e_2$.
Then~$u_1$ and~$u_2$ are false twins at distance~$1$ from~$u$ and we can apply Rule~\ref{r-4}, a contradiction.
So we may assume that the prior graph contains exactly three new edges.

We claim that the application of Rule~\ref{r-9} does not yield a~$K_4$.
For contradiction, suppose it does.
Let~$K$ be the created~$K_4$.
Then at least one new edge is contained in~$K$.
If exactly one new edge~$e$ is contained in~$K$, then $K - e$ is a diamond in the subsequent graph.
Then we could have applied Rule~\ref{r-3}, a contradiction.
If all three new edges are in~$K$, then they must induce either a path on four vertices or a triangle in the subsequent graph.
In the first case the subsequent graph is~$H_2$ and in the second case the subsequent graph is~$H_3$.
In both cases we would have applied Rule~\ref{r-8}, a contradiction.
Finally, suppose that~$K$ contains exactly two new edges, say~$e_1$ and~$e_2$.
If~$e_1$ and~$e_2$ do not share a vertex, then they cover the vertices of~$K$.
Hence the end-vertices of~$e_1$ are false twins (at distance~$2$ from~$u$) in the subsequent graph, since they are both adjacent to~$u_1$ and to each end-vertex of~$e_2$.
Then we could have applied Rule~\ref{r-4}, a contradiction.
If $e_1 =v_1v_2$ and $e_2 = v_3v_4$ share a vertex, say $v_2 = v_4$, then~$v_1$ and~$v_3$ are adjacent in the subsequent graph and the vertex $w \in K \setminus \{v_1,v_2,v_3\}$ is adjacent only to~$v_1$, $v_2$ and~$v_3$.
Therefore Rule~\ref{r-8} could have been applied, a contradiction.

Thus an application of Rule~\ref{r-9} does not yield a~$K_4$, and we may colour~$u_1$, $u_2$,~$u_3$ with~$2$ and~$u$ with~$1$.
Indeed note that since if two end-vertices of a new edge are coloured~$2$, then in the subsequent graph the vertices coloured~$2$ will still induce a forest, in which such a new edge is replaced by a path of length~$2$.
This completes the proof of Claim~\ref{clm:r9-safe} and therefore completes the proof of Theorem~\ref{thm:cubic}.\qedllncs
\end{proof}

\section{Quadratic Time for Graphs of Maximum Degree at Least~$3$}\label{s-maxdegree}
Let $k\geq 3$ be an integer.
Recall that a graph~$G$ has a $k$-degenerate decomposition if its vertex set can be decomposed into sets~$A$ and~$B$ where~$A$ is an independent set and~$B$ induces a $(k-2)$-degenerate graph.
Note that $3$-degenerate decompositions are near-bipartite decompositions.
We give an~$O(n^2)$ algorithm for finding a $k$-degenerate decomposition of a graph on~$n$ vertices of maximum degree at most~$k$ for every $k\geq 3$ (note that for $k=3$ we can also use Theorem~\ref{thm:cubic}).

For $k \geq 1$, we say that an order $v_1,v_2,\ldots,v_n$ of the vertices of a graph~$G$ is \emph{$k$-degenerate} if for all $i \geq 2$, the vertex~$v_i$ has at most~$k$ neighbours in $\{v_1,\ldots,v_{i-1}\}$.
It is clear that a graph is $k$-degenerate if and only if it has a $k$-degenerate order.
If~${\mathcal O}$ is a $k$-degenerate order for~$G$ and~$W$ is a subset of the vertex set of~$G$, then we let~$\mathcal{O}|_W$ be the restriction of~${\mathcal O}$ to~$W$, and let~$G[W]$ be the subgraph of~$G$ induced by~$W$.
For a set of vertices~$C$, we denote the {\em neighbourhood} of~$C$ by $N(C)=\bigcup\{N(u) \; | \; u \in C\} \setminus C$.

We need the following lemma, which is a refinement of Lemma~8 in~\cite{FJP16} with the same proof.
We include a sketch of the argument for completeness.
\begin{lemma}\label{lem:degenerate}
Let $k \geq 2$.
Let~$G$ be a $(k-1)$-degenerate graph on~$n$ vertices.
If a $(k-1)$-degenerate order~$\mathcal{O}$ of~$G$ is given as input, a $k$-degenerate decomposition $(A,B)$ of~$G$ can be found in~$O(kn)$ time.
In addition, we can ensure that~$\mathcal{O}|_B$ is a $(k-2)$-degenerate order of~$G[B]$, the set~$A$ is a maximal independent set and the first vertex in~$\mathcal{O}$ belongs to~$A$.
\end{lemma}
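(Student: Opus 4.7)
The plan is to process the vertices in the order $\mathcal{O}=v_1,v_2,\ldots,v_n$ and greedily decide, one by one, whether each vertex belongs to $A$ or $B$. Place $v_1$ in $A$. For $i\geq 2$, inspect the previous neighbours of $v_i$, i.e.\ $N(v_i)\cap\{v_1,\ldots,v_{i-1}\}$; there are at most $k-1$ such vertices because $\mathcal{O}$ is a $(k-1)$-degenerate order. If none of them currently lies in $A$, put $v_i$ into $A$; otherwise put $v_i$ into $B$.

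Let me verify that the four required properties hold. First, $A$ is independent: a vertex is added to $A$ only when none of its earlier neighbours lies in $A$, and later decisions never create edges inside $A$. Second, $A$ is maximal: if $v_i\in B$, this is because $v_i$ has an earlier neighbour already in $A$, so $A\cup\{v_i\}$ is not independent. Third, $v_1\in A$ by construction. Fourth, $\mathcal{O}|_B$ is a $(k-2)$-degenerate order of $G[B]$: whenever $v_i$ is placed in $B$, at least one of its at most $k-1$ earlier neighbours lies in $A$, so $v_i$ has at most $k-2$ earlier neighbours inside $B$, hence at most $k-2$ earlier neighbours in $G[B]$ with respect to $\mathcal{O}|_B$. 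In particular $G[B]$ is $(k-2)$-degenerate, so $(A,B)$ is the desired $k$-degenerate decomposition.

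For the running time, the key observation is that at no point do we need to look at more than the $k-1$ earlier neighbours of a vertex. Concretely, a single pass through the edge set of $G$, done once at the start in time $O(kn)$ (using that a $(k-1)$-degenerate graph has at most $(k-1)n$ edges), produces for every vertex $v_i$ the list of its earlier neighbours in $\mathcal{O}$. Thereafter, maintaining a Boolean array recording which of $v_1,\ldots,v_{i-1}$ lie in $A$, the decision at step $i$ takes $O(k)$ time. Summed over all vertices this yields $O(kn)$ total time, matching the claimed bound. There is no real obstacle here: the only subtlety is the bookkeeping needed to access earlier neighbours in constant amortised time per neighbour, which the $O(kn)$ preprocessing resolves. \qedllncs
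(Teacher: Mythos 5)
Your proposal is correct and is essentially the same greedy argument as the paper's: process the vertices along $\mathcal{O}$, placing each vertex in~$A$ unless it has an earlier neighbour already in~$A$, and observe that membership in~$B$ forces an earlier neighbour in~$A$, giving at most $k-2$ earlier neighbours in~$B$. The only difference is that you spell out the $O(kn)$ bookkeeping explicitly, which the paper leaves implicit.
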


\begin{proof}
Let~$\mathcal{O}$ be $v_1,v_2,\ldots,v_n$.
Consider the greedy algorithm that starts with two empty sets~$A$ and~$B$, and, at step~$i$, assigns~$v_i$ to~$A$ unless~$v_i$ has a neighbour of smaller index already in~$A$, in which case it assigns~$v_i$ to~$B$.
Clearly the set~$A$ is an independent set and every vertex of~$B$ has at least one neighbour in~$A$, so~$A$ is a maximal independent set.
At any step~$i$, if the vertex~$v_i$ is assigned to~$B$, then it has a neighbour of smaller index that belongs to~$A$.
This implies that~$v_i$ has at most~$k-2$ neighbours of smaller index that belong to~$B$.\qedllncs
\end{proof}

A pair of non-adjacent vertices $\{u,v\}$ in a graph~$G$ is \emph{strong} if~$u$ and~$v$ have a common neighbour in each component of the graph $G \setminus \{u, v\}$.
In particular, note that if~$u$ and~$v$ have a common neighbour and $G \setminus \{u, v\}$ is connected, then $\{u,v\}$ is a strong pair.

\begin{lemma}\label{lem:ifstrongpair}
Let $k \geq 3$.
Let~$G$ be a connected $k$-regular graph on~$n$ vertices that contains a strong pair~$\{u,v\}$.
If~$\{u,v\}$ is given as input, a $k$-degenerate decomposition of~$G$ can be found in~$O(kn)$ time.
\end{lemma}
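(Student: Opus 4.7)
The plan is to put $u$ and $v$ directly into the independent set of the decomposition and then apply a greedy argument in the spirit of Lemma~\ref{lem:degenerate} to the graph $H := G - \{u,v\}$. Let $C_1,\dots,C_m$ be the components of $H$; by the strong-pair hypothesis each $C_j$ contains a vertex $w_j \in N_G(u) \cap N_G(v)$, and such a vertex can be located in $O(k)$ time per component by intersecting the adjacency lists of $u$ and $v$ and reading off the component labels produced during a BFS of $H$. In each $C_j$ I would run a BFS rooted at $w_j$ and reverse the resulting order to obtain an ordering $\sigma_j$ in which $w_j$ is last and every other vertex has its BFS-parent to its right; then concatenate these orderings to obtain an order $\sigma$ of $V(H)$. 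Finally, initialise $A := \{u,v\}$ and $B := \emptyset$ and sweep through $\sigma$, placing each vertex into $A$ if it has no neighbour already in $A$ and into $B$ otherwise; output $(A,B)$.

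The set $A$ is independent because $u$ and $v$ are non-adjacent (by the strong-pair hypothesis) and the greedy rule never creates an edge inside $A$. To show that $G[B]$ is $(k-2)$-degenerate I would verify that $\sigma|_B$ is a $(k-2)$-degenerate order. Fix $v_i \in B$ and write $a := |N_G(v_i) \cap \{u,v\}|$, and let $b$ (resp.\ $c$) be the number of neighbours of $v_i$ that precede (resp.\ follow) it in $\sigma$; $k$-regularity gives $a+b+c=k$. If $v_i = w_j$ for some $j$, then $a=2$ and $c=0$, so $v_i$ has exactly $k-2$ earlier neighbours in $\sigma$ and in particular at most $k-2$ earlier neighbours in $B$. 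Otherwise $v_i$ has its BFS-parent later in $\sigma$, so $c\geq 1$; if $a\geq 1$ then $b\leq k-2$ directly, and if $a=0$ then the greedy placement of $v_i$ in $B$ forces an earlier neighbour in $A\cap V(H)$, so the number of earlier neighbours of $v_i$ in $B$ is at most $b-1\leq (k-1)-1 = k-2$. In every case $v_i$ has at most $k-2$ earlier $B$-neighbours, as required.

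The component decomposition, the selection of all the $w_j$, the BFS orderings, and the greedy sweep each cost $O(kn)$, matching the claimed running time. The key combinatorial point, and the spot I expect to need the most care, is the analysis of the BFS roots $v_i = w_j$: these have no later neighbour in $\sigma$ and would violate the $(k-1)$-degeneracy hypothesis of Lemma~\ref{lem:degenerate} outright, but the strong-pair hypothesis guarantees that they are adjacent to both $u$ and $v$, which are placed in $A$; the two ``missing'' later neighbours are thus compensated for in the bookkeeping. Every other vertex in $\sigma$ poses no difficulty because a later BFS-parent is always available.
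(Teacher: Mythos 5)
Your proof is correct and follows essentially the same strategy as the paper: both put $u$ and $v$ into the independent set and use the strong-pair hypothesis to root a reverse breadth-first search at a common neighbour of $u$ and $v$ in each component, so that every non-root vertex has a neighbour later in the order while the roots' missing later neighbour is compensated by their adjacency to both $u$ and $v$. The only difference is packaging: the paper identifies $u$ and $v$ into a single vertex $z$ placed first in the order so that Lemma~\ref{lem:degenerate} applies verbatim to a $(k-1)$-degenerate order of the identified graph, whereas you delete $u,v$, seed $A=\{u,v\}$, and re-verify the $(k-2)$-degeneracy of $\sigma|_B$ by a direct count; the resulting decompositions coincide.
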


\begin{proof}
Let~$G'$ be the graph obtained by identifying~$u$ and~$v$ into a new vertex~$z$ with $N(z)=N(u) \cup N(v)$.
For each component~$C$ of $G' \setminus \{z\}$, let~$z_C$ be a vertex of~$C$ that is adjacent to both~$u$ and~$v$ in~$G$.
We find a $(k-1)$-degenerate order~${\mathcal O}$ of the vertices of~$G'$ as follows.
Let~$z$ be the first vertex in~${\mathcal O}$.
Consider each component~$C$ of $G' \setminus \{z\}$ in turn, and append to~${\mathcal O}$ the vertices of~$C$ in the reverse of the order they are found in a breadth-first search from~$z_C$ (note that this can be done in $O(kn)$ time).
Then~${\mathcal O}$ is a $(k-1)$-degenerate order as every vertex has a neighbour later in the order except for each~$z_C$ vertex, which has degree $k-1$.
It follows from Lemma~\ref{lem:degenerate} that we can find a $k$-degenerate decomposition $(A, B)$ of~$G'$ with $z \in A$.
Then $(A \setminus \{z\} \cup \{u,v\}, B)$ is a $k$-degenerate decomposition of~$G$ as $G[B]=G'[B]$, and~$u$ and~$v$ are non-adjacent so $A \setminus \{z\} \cup \{u,v\}$ is an independent set.\qedllncs
\end{proof}

\begin{lemma}\label{lem:ifquasiclique}
Let $k \geq 3$.
Let~$G$ be a $k$-regular connected graph on~$n$ vertices, which contains a set~$C$ of $k+1$ vertices that induce a clique minus an edge~$uv$.
If $C$, $u$ and~$v$ are given as input, then a $k$-degenerate decomposition of~$G$ can be found in~$O(kn)$ time.
\end{lemma}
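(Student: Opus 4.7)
The plan is to reduce the problem to a smaller, $(k-1)$-degenerate graph and invoke Lemma~\ref{lem:degenerate}. The key structural observation is that since $G$ is $k$-regular and the $k+1$ vertices of $C$ induce $K_{k+1}$ minus the edge $uv$, every vertex of $C\setminus\{u,v\}$ has all $k$ of its neighbors in $C$, while $u$ and $v$ each have $k-1$ neighbors in $C$ and hence exactly one neighbor outside $C$; I call these $u'$ and $v'$ respectively (possibly $u'=v'$). In particular, the only edges of $G$ leaving $C$ are $uu'$ and $vv'$, and $u$ and $v$ are non-adjacent in $G$.

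Next, I would let $H$ be the subgraph of $G$ induced by $(V(G)\setminus C)\cup\{u,v\}$ and form a simple graph $H'$ from $H$ by identifying $u$ and $v$ into a single new vertex $z$ whose neighborhood is $\{u',v'\}$. Then $z$ has degree at most~$2$ in $H'$, and every other vertex of $H'$ has the same (at most $k$) neighbors as in $G$. Because $G$ is connected and the only edges leaving $C$ are $uu'$ and $vv'$, each component of $G[V(G)\setminus C]$ must contain $u'$ or $v'$, so via $z$ the graph $H'$ is connected. A breadth-first search from $z$ in $O(kn)$ time, with its visit order reversed, then gives a $(k-1)$-degenerate order of $H'$ whose first vertex is $z$: each non-$z$ vertex has its BFS-parent later in the order, so at most $k-1$ of its at-most-$k$ neighbors appear earlier. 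Applying Lemma~\ref{lem:degenerate} to $H'$ with this order returns, in $O(kn)$ time, a $k$-degenerate decomposition $(A',B')$ of $H'$ with $z\in A'$.

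Finally, I would set $A=(A'\setminus\{z\})\cup\{u,v\}$ and $B=B'\cup(C\setminus\{u,v\})$. For independence of $A$, the vertices $u$ and $v$ are non-adjacent in $G$, and their only neighbors outside $C$ are $u'$ and $v'$, both of which lie in $B'$ because they are adjacent in $H'$ to $z\in A'$. For $B$, the set $C\setminus\{u,v\}$ induces a $K_{k-1}$ and, by the structural observation, has no edges in $G$ to $V(G)\setminus C$, so $G[B]$ is the disjoint union of $H'[B']=G[B']$, which is $(k-2)$-degenerate by Lemma~\ref{lem:degenerate}, and a $K_{k-1}$, which is also $(k-2)$-degenerate. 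The total running time is $O(kn)$.

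The main obstacle is in the reduction step: one must justify that the contracted graph $H'$ is connected (so that a single BFS yields the required degenerate order with $z$ first), and that the decomposition of $H'$ stitches back correctly to one for $G$. Both points follow from the rigid local structure forced by having a near-clique on $k+1$ vertices inside a $k$-regular graph, which isolates $C\setminus\{u,v\}$ from the rest of $G$ and leaves only the two pendant edges $uu'$ and $vv'$ to manage.
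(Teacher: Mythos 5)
There is a genuine gap at the ordering step. After identifying $u$ and $v$ into $z$ with $N(z)=\{u',v'\}$, you claim that the reverse of a BFS order from $z$ is a $(k-1)$-degenerate order of $H'$ \emph{whose first vertex is $z$}. Reversing a BFS started at $z$ puts $z$ \emph{last}, not first, and this is not a repairable slip: when $u'\neq v'$, every vertex of $H'$ other than $z$ has degree exactly $k$ in $H'$ (the only edges of $G$ leaving $C$ are $uu'$ and $vv'$, and $u'$, $v'$ keep their $k-1$ neighbours outside $C$ plus the edge to $z$), whereas the final vertex of any $(k-1)$-degenerate order must have degree at most $k-1$ in $H'$. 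Hence \emph{every} $(k-1)$-degenerate order of $H'$ ends with $z$, and none begins with $z$. Consequently you cannot invoke the ``first vertex lies in $A$'' guarantee of Lemma~\ref{lem:degenerate} to force $z\in A'$; if instead $z\in B'$ (equivalently, $u'$ or $v'$ is placed in $A'$ by the greedy procedure), your stitching $A=(A'\setminus\{z\})\cup\{u,v\}$ is no longer independent, since $u$ or $v$ would then be adjacent to a vertex of $A'$. Note the contrast with Lemma~\ref{lem:ifstrongpair}, where the identification trick does work: there the BFS roots are \emph{common} neighbours of $u$ and $v$, so they lose a degree under the identification and have degree $k-1$, which is exactly what lets $z$ be placed first; here $u'$ and $v'$ are generically not common neighbours, so that mechanism is unavailable.

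The paper's proof sidesteps this by deleting all of $C$ (so that $t=u'$ and $w=v'$ have degree $k-1$ in $G-C$, making the reverse-BFS order(s) valid) and, crucially, by \emph{not} insisting that both $u$ and $v$ go into the independent set: if $t$ and $w$ both land in $B$ it proceeds as you do, but if (say) $t\in A$ it instead places a third clique vertex $x\in C\setminus\{u,v\}$ into $A$ and appends all of $C\setminus\{x\}$ to $B$ in the order $v$, then $C\setminus\{u,v,x\}$, then $u$, verifying the $(k-2)$-degeneracy of the extended order directly. To salvage your argument you would need an analogous case analysis for the outcome $z\in B'$ (or $u'\in A'$), which your proposal does not provide; the case $u'=v'$ can, as in the paper, be handed off to Lemma~\ref{lem:ifstrongpair}.
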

\begin{proof}
Let~$x$ be a vertex in~$C$ distinct from~$u$ and~$v$.
Let~$G'$ be the graph obtained from~$G$ by deleting~$C$.
Each of~$u$ and~$v$ has exactly one neighbour that does not belong to~$C$ and all other vertices of~$C$ have no neighbours outside~$C$.
Let~$t$ be the neighbour of~$u$ not in~$C$, and let~$w$ be the neighbour of~$v$ not in~$C$.
We may assume that~$t$ is distinct from~$w$, otherwise we are done by Lemma~\ref{lem:ifstrongpair}.
We can find a $(k-1)$-degenerate order~${\mathcal O}$ of~$G'$ in $O(kn)$ time by taking the vertices in the reverse of the order they are found in a breadth-first search from~$t$, and then, if~$t$ and~$w$ do not belong to the same component of~$G'$, appending the vertices in the reverse of the order they are found in a breadth-first search from~$w$.
By Lemma~\ref{lem:degenerate}, we can compute a $k$-degenerate decomposition~$(A,B)$ of~$G'$ in~$O(kn)$ time such that~${\mathcal O}|_B$ is a $(k-2)$-degenerate order of~$B$.
If both~$t$ and~$w$ belong to~$B$, let $A'=A \cup \{u,v\}$ and $B'= B \cup (C \setminus \{u,v\})$ and, since $(C \setminus \{u,v\})$ is a clique on $k-1$ vertices with no edge joining it to~$B$, if follows that $(A',B')$ is a $k$-degenerate decomposition of~$G$.
Assume now without loss of generality that $t \in A$ (we make no assumption about whether~$w$ is also in~$A$).
Then let $A'=A \cup \{x\}$ and $B'=B \cup (C \setminus \{x\})$.
Then~$A'$ is an independent set.
Recall that~${\mathcal O}|_B$ is a $(k-2)$-degenerate order of~$B$.
We show that we can append the vertices of $C \setminus \{x\}$ to obtain a $(k-2)$-degenerate order of~$B'$.
First add~$v$, then the vertices of $C \setminus \{u,v,x\}$ and finally~$u$.
It is clear that no vertex has more than $k-2$ neighbours earlier in the order.\qedllncs
\end{proof}

\begin{lemma}\label{lem:ifclique}
Let $k\geq 3$.
Let~$G$ be a $k$-regular connected graph on~$n$ vertices containing a clique~$C$ on~$k$ vertices whose neighbourhood is of size~$2$.
If~$C$ is given as input, a $k$-degenerate decomposition of~$G$ can be found in~$O(kn)$ time.
\end{lemma}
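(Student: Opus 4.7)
The plan is to mimic the strategy of Lemmas~\ref{lem:ifstrongpair} and~\ref{lem:ifquasiclique}: remove $C$ to obtain $G' = G \setminus C$, produce a $(k-1)$-degenerate order of $G'$, apply Lemma~\ref{lem:degenerate} to obtain a $k$-degenerate decomposition $(A,B)$ of $G'$, and extend it by putting one specific vertex of $C$ into $A$ and appending the remaining $k-1$ vertices of $C$ to the $(k-2)$-degenerate order of $B$, with $u$ or $v$ possibly relocated between the two classes.

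First I would establish the setup. Since $C$ is a $k$-clique and $|N(C)|=2$, every vertex of $C$ has exactly one neighbour outside $C$, lying in $\{u,v\}$, so $C$ partitions into $C_u = C \cap N(u)$ and $C_v = C \cap N(v)$, both non-empty, with $|C_u|+|C_v|=k$. Since $k \geq 3$, at least one of $|C_u|,|C_v|$ is at least $2$, and after swapping $u$ and $v$ if necessary I may assume $|C_u|\geq 2$. In $G'$ the vertices $u,v$ have respective degrees $k-|C_u|,k-|C_v|$, while every other vertex still has degree $k$. The graph $G'$ is $(k-1)$-degenerate, because any $k$-regular subgraph of $G'$ would avoid $u$ and $v$, and a standard closure argument on its edges would show its vertex set to be a non-trivial component of $G$ disjoint from $C\cup\{u,v\}$, contradicting the connectedness of $G$. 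Thus a $(k-1)$-degenerate order of $G'$ is obtained in $O(kn)$ time via a reversed BFS from $u$ in its component of $G'$, followed by a reversed BFS from $v$ in the other component if $G'$ is disconnected. Applying Lemma~\ref{lem:degenerate} then yields $(A,B)$ in $O(kn)$ time with $B$'s restricted order $(k-2)$-degenerate.

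The remainder is a case analysis on the pair of colours of $u$ and $v$. In each case I specify a single vertex $c\in C$ to place in $A$ and an augmented order of $B'$. If $u\in A,v\in B$, take $c\in C_v$ and extend $B$ by $C_v\setminus\{c\}$ then $C_u$. If $u\in B,v\in A$, symmetric: $c\in C_u$ and $C_u\setminus\{c\}$ then $C_v$. If $u,v\in A$, take $c\in C_u$, remove $u$ from $A$, and order $B'$ as $B$, $u$, $C_u\setminus\{c\}$, $C_v$; here $u\in A$ forces all of its $G'$-neighbours into $B$, so $u$'s earlier count is $k-|C_u|\leq k-2$ thanks to $|C_u|\geq 2$, and the last $C_v$ vertex has earlier count $|C_u|+|C_v|-2=k-2$ because $v$ is not in $B'$. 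If $u,v\in B$, take $c\in C_u$ and order $B'$ as $(B\setminus\{u\})$, $C_v$, $C_u\setminus\{c\}$, $u$; now $u\in B$ implies that $u$ had at least one $A$-neighbour in $G'$ during the greedy step, so its earlier count at the end is at most $(k-|C_u|-1)+(|C_u|-1)=k-2$, while the $C_v$ block is safe because $|C_v|\leq k-2$. In every case $A'$ remains independent, since the only outside neighbour of $c$ is, by construction, in $B'$.

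The main obstacle is the case $u,v\in B$: each of the $k-1$ vertices of $C\setminus\{c\}$ has $B'$-degree $k-1$, so a naive append to $B$'s order creates a vertex with $k-1$ earlier neighbours at the last position. The fix is to move $u$ to the very end of $B'$'s order, which both postpones one edge out of every $C_u\setminus\{c\}$ vertex and exploits the fact that Lemma~\ref{lem:degenerate}'s greedy necessarily burned one of $u$'s neighbours into $A$. Finding the partition $(C_u,C_v)$, the reversed BFS, the decomposition of $G'$, and the extension each take $O(kn)$ time, giving the claimed overall running time.
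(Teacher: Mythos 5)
Your proposal is correct and follows essentially the same route as the paper: delete~$C$, build a $(k-1)$-degenerate order by reverse breadth-first search, apply Lemma~\ref{lem:degenerate}, then move a single vertex of~$C$ into the independent set and append the remaining clique vertices to the $(k-2)$-degenerate order, placing~$u$ last and exploiting the maximality of~$A$ in the hardest case. The only divergence is that the paper adds the edge~$uv$ to the reduced graph before decomposing, which forces at least one of~$u$,~$v$ into~$B$ and cuts the analysis to two cases, whereas you keep $G\setminus C$ unchanged and handle all four placements of~$u$ and~$v$, including demoting~$u$ from~$A$ to~$B$ when both land in~$A$ (sound, since then $\deg_{G\setminus C}(u)=k-|C\cap N(u)|\le k-2$ under your assumption that~$u$ has at least two neighbours in~$C$); both variants yield the same $O(kn)$ bound.
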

\begin{proof}
Let~$u$ and~$v$ be vertices not in~$C$ such that for each vertex in~$C$, its unique neighbour not in~$C$ is either~$u$ or~$v$.
Neither~$u$ nor~$v$ can be adjacent to every vertex in~$C$ (as then the other would be adjacent to none, contradicting the premise that the neighbourhood has size~$2$).
Since $k \geq 3$, one of~$u$ and~$v$ has at least two neighbours in~$C$.
Consider the graph~$G'$ obtained from~$G$ by removing~$C$ and adding the edge~$uv$ (if it does not already exist).
Note that~$u$ and~$v$ each have degree at most~$k$ in~$G'$ and at least one of them, say~$u$, has degree less than~$k$.
Therefore, we can find a $(k-1)$-degenerate order~${\mathcal O}$ of~$G'$ in $O(kn)$ time by taking the vertices in the order they are found in a breadth-first search from~$u$.
Thus we can obtain a $k$-degenerate decomposition $(A,B)$ of~$G'$ by Lemma~\ref{lem:degenerate}, such that~$\mathcal{O}|_B$ is a $(k-2)$-degenerate order of~$G[B]$ and~$A$ is a maximal independent set.
At least one of~$u$ and~$v$ must belong to~$B$.
Assume without loss of generality that either $v \in A$, $u \in B$ or both~$u$ and~$v$ belong to~$B$, and, in the latter case, assume that~$u$ has at least two neighbours in~$C$.
Consider a neighbour~$t$ of~$u$ in~$C$.
We set $A'=A \cup \{t\}$ and $B'=B \cup (C \setminus \{t\})$, and claim that $(A',B')$ is a $k$-degenerate decomposition of~$G$.
It is clear that~$A'$ is an independent set.
Recall that~${\mathcal O}|_B$ is a $(k-2)$-degenerate order for~$G[B]$.
We must amend it to find a $(k-2)$-degenerate order for~$G[B']$ that also includes the vertices of $C \setminus \{t\}$.
We consider two cases.

First suppose $v \in A$.
Then append to~${\mathcal O}|_B$ first the neighbours of~$u$ in $C \setminus \{t\}$ and then the neighbours of~$v$.
As the vertices of $C \setminus \{t\}$ are adjacent to~$t$ the only one that could have more than $k-2$ vertices before it in the order is the one that appears last, but this is also adjacent to~$v$ so we do indeed have a $(k-2)$-degenerate order.

Now suppose $v \in B$.
Then~$u$ has a neighbour in~$G'$ that belongs to~$A$ (as~$A$ is a maximal independent set).
Hence~$u$ has at most $k-2$ neighbours in~$B'$.
Append to~${\mathcal O}|_B$ the vertices of $C \setminus \{t\}$, ending with a neighbour of~$u$ (we know there is at least one), then move~$u$ to be the final vertex in the order.
Again the only vertex of $C \setminus \{t\}$ that could have more than $k-2$ neighbours before it in the order is the one that appears last, and by choosing it to be a neighbour of~$u$ and putting~$u$ later in the order we ensure that a $(k-2)$-degenerate order is obtained.\qedllncs
\end{proof}

Given a graph~$G$, five of its vertices $t,u,v,w,x$ and a set of vertices~$C$, we say that~$C$ induces a \emph{$(u,v)$-lock with special vertices $(t,\{w,x\})$} if $t,w,x \in C$ and $N(C)=\{u,v\}$, and both~$u$ and~$v$ are adjacent to~$t$, each vertex in $\{w,x\}$ is adjacent to precisely one vertex in $\{u,v\}$, and $G[C]$ contains all possible edges except for~$wt$ and~$xt$.
We say that~$C$ is a \emph{lock} if it is a \emph{$(u,v)$-lock with special vertices $(t,\{w,x\})$} for some choice of $t,u,v,w,x$.

\begin{lemma}\label{lem:iflock}
Let $k\geq 3$.
Let~$G$ be a $k$-regular connected graph on~$n$ vertices containing a $(u,v)$-lock~$C$ with special vertices $(t,\{w,x\})$.
If~$C$ and $u,v,t,w,x$ are given as input, then a $k$-degenerate decomposition of~$G$ can be found in~$O(kn)$ time.
\end{lemma}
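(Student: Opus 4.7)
The plan is to reduce to $G':=G\setminus C$, use Lemma~\ref{lem:degenerate} to obtain a $k$-degenerate decomposition $(A,B)$ of $G'$, and then extend it to $G$ by distributing the vertices of $C$ between $A$ and $B$. Since $t \in N(u)\cap N(v)\cap C$, both $u$ and $v$ have degree at most $k-1$ in $G'$, so a $(k-1)$-degenerate order $\mathcal{O}$ of $G'$ can be produced in $O(kn)$ time exactly as in Lemma~\ref{lem:ifquasiclique}: take the vertices in reverse of the BFS order from $u$, and if $v$ lies in a different component then append the vertices in reverse of the BFS order from $v$ in that component. Applying Lemma~\ref{lem:degenerate} then yields $(A,B)$ together with a $(k-2)$-degenerate order $\mathcal{O}|_B$ of $G[B]$.

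To extend $(A,B)$ to $G$, I would choose an independent subset $C_A\subseteq C$ with no $A$-neighbour and set $C_B:=C\setminus C_A$. Because $C$ is a clique on $k+1$ vertices minus the two edges $wt,xt$, its only independent sets of size $2$ are $\{t,w\}$ and $\{t,x\}$. Writing the vertices of $C\setminus\{t,w,x\}$ as $z_1,\ldots,z_{k-2}$, the choice of $C_A$ is dictated by the positions of $u,v$. If both $u,v\in A$, take $C_A=\{z_1\}$: then $C_B$ has no edge to $B$, and appending the order $t,z_2,\ldots,z_{k-2},w,x$ to $\mathcal{O}|_B$ gives each vertex at most $k-2$ earlier neighbours. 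If both $u,v\in B$, take $C_A=\{t,w\}$: then $C_B=\{x,z_1,\ldots,z_{k-2}\}$ is a clique on $k-1$ vertices with a single edge to $B$ (via $x$), and appending $x,z_1,\ldots,z_{k-2}$ works. If exactly one of $u,v$ is in $A$ (WLOG $u\in A$, $v\in B$) and either both $w,x$ are adjacent to $u$ or exactly one of them is adjacent to $v$, I take $C_A=\{z_1\}$ or $C_A$ equal to whichever of $w,x$ is adjacent to $v$: in both subcases $C_B$ has the single edge $tv$, and the corresponding extension orders ($t,z_2,\ldots,z_{k-2},w,x$ or $t,z_1,\ldots,z_{k-2},w$) keep each earlier-neighbour count at most $k-2$.

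The hard part will be the remaining subcase: $u\in A$, $v\in B$, and both $w,x$ adjacent to $v$. Here $C_B$ must contain all of $t,w,x$, giving three edges $tv,wv,xv$ to $B$, and the obvious orderings fail because whichever of $w,x$ comes last acquires $k-1$ earlier neighbours. My fix is to take $C_A=\{z_1\}$ and append $x,w,z_2,\ldots,z_{k-2},t$ to $\mathcal{O}|_B$: $x$ has only $v$ as an earlier neighbour; $w$ has $x$ and $v$, for a total of $2\leq k-2$ since this subcase requires $k\geq 4$; each $z_i$ has at most $i\leq k-2$ earlier neighbours among $\{x,w\}\cup\{z_j:j<i\}$; and $t$ comes last, adjacent only to $z_2,\ldots,z_{k-2}$ inside $C_B$ (since $wt,xt\notin E$), plus the edge to $v$, giving $k-2$ earlier neighbours in total. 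The reason $k=3$ is exempt from this subcase is that then $|N_C(v)|=3=k$, so $v$ is isolated in $G'$ and the greedy of Lemma~\ref{lem:degenerate} places $v$ in $A$. Every case decision and extension order is computed in constant time given the adjacencies, so the total running time is $O(kn)$.
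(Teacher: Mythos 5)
Your proof is correct, but it executes the reduction differently from the paper. You delete all of~$C$ and then accept whatever placement of~$u$ and~$v$ the greedy of Lemma~\ref{lem:degenerate} produces, repairing the decomposition by a five-way case analysis on the positions of~$u,v$ and the adjacency pattern of~$w,x$, including the delicate subcase ($u\in A$, $v\in B$, both of $w,x$ adjacent to~$v$) where your ordering $x,w,z_2,\ldots,z_{k-2},t$ only gives counts at most~$k-2$ when $k\geq 4$, supplemented by the (correct) observation that for $k=3$ the vertex of $\{u,v\}$ adjacent to all of $t,w,x$ is isolated in $G\setminus C$ and is therefore forced into~$A$ by the greedy. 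The paper instead keeps~$t$ in the reduced graph: it first dispatches the case where $w$ and~$x$ attach to the same vertex of $\{u,v\}$ via Lemma~\ref{lem:ifclique}, and otherwise deletes only $C\setminus\{t\}$, builds the $(k-1)$-degenerate order by reverse BFS from~$u$ and then moves~$t$ to the front, so that Lemma~\ref{lem:degenerate} forces $t\in A$ and hence $u,v\in B$; it then adds~$w$ to~$A$ and appends $C\setminus\{w,t\}$ beginning with~$x$, with no case analysis and no special treatment of $k=3$. What your route buys is self-containment (no appeal to Lemma~\ref{lem:ifclique}) and a demonstration that no engineering of the degenerate order is needed beyond plain BFS; what the paper's trick buys is uniformity and brevity, since pinning $t\in A$ collapses all of your cases into one. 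One cosmetic remark: in your subcase where exactly one of $w,x$ is adjacent to~$v$, the order you write ($t,z_1,\ldots,z_{k-2},w$) implicitly assumes $x$ is the vertex adjacent to~$v$; since $w$ and~$x$ are interchangeable in the definition of a lock this is harmless, but it should be stated as a relabelling.
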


\begin{proof}
Since~$t$ has two neighbours outside~$C$, it follows that~$C\setminus\{t\}$ is a clique on~$k$ vertices.
If~$w$ and~$x$ have the same neighbour in~$\{u,v\}$, say~$u$, then $N(C \setminus \{t\})=\{t,u\}$ and so we are done by Lemma~\ref{lem:ifclique}.
We may therefore assume that~$w$ and~$x$ have distinct neighbours in~$\{u,v\}$.
Let~$G'$ be the graph obtained from~$G$ by deleting~$C\setminus \{t\}$ and note that~$G'$ is connected since~$t$ is adjacent to both~$u$ and~$v$.
Note that both~$u$ and~$v$ have degree $k-1$ in~$G'$.
We can therefore find a $(k-1)$-degenerate order~${\mathcal O}$ of~$G'$ in $O(kn)$ time by taking the vertices in the reverse of the order they are found in a breadth-first search from~$u$.
Furthermore, since the only neighbours of~$t$ in~$G'$ are~$u$ and~$v$, both of which have degree $k-1$, by moving~$t$ to the start of the order~${\mathcal O}$, we obtain a another $(k-1)$-degenerate order~${\mathcal O'}$.
By Lemma~\ref{lem:degenerate}, we can therefore find a $k$-degenerate decomposition $(A,B)$ of~$G'$ such that~${\mathcal O'}|_B$ is a $(k-2)$-degenerate order on~$B$ and $t \in A$.
Thus both~$u$ and~$v$ belong to~$B$.
We let $A'=A\cup\{w\}$ and $B'=B \cup (C \setminus \{w,t\})$, and claim that $(A',B')$ is a $k$-degenerate decomposition of~$G$.
It is clear that~$A'$ is an independent set.
We have the $(k-2)$-degenerate order~${\mathcal O'}|_B$ on~$B$.
We obtain a $(k-2)$-degenerate order on~$B'$ by appending to~${\mathcal O'}|_B$ the vertices of $C \setminus \{w,t\}$ beginning with~$x$.
Indeed, the only neighbour of~$x$ that is earlier in the order is its single neighbour in $\{u,v\}$ (and note that $1 \leq k-2$ since $k \geq 3$).
Furthermore, since $w,t \in A'$, every vertex in $C \setminus \{w,t,x\}$ has only $k-2$ neighbours in~$B'$.\qedllncs
\end{proof}

A pair of non-adjacent vertices~$u$, $v$ in a graph is a \emph{good} pair if~$u$ and~$v$ have a common neighbour.
Note that if a good pair~$u$, $v$ is not strong, then $G \setminus \{u,v\}$ must be disconnected.
We are now ready to state and prove the following result.

\begin{theorem}\label{thm:k2-deg}Let $k\geq 3$ and~$G$ be a graph on~$n$ vertices with maximum degree at most~$k$.
If no component of~$G$ is isomorphic to~$K_{k+1}$, then a $k$-degenerate decomposition of~$G$ can be found in~$O(kn^2)$ time.
\end{theorem}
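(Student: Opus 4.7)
The plan is to process each connected component $H$ of $G$ independently. Since $\sum_i m_i^2 \leq n^2$ where $m_i$ denotes the size of the $i$th component, it suffices to exhibit an $O(km^2)$-time algorithm per component of $m$ vertices.

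First I would dispense with the easy case in which $H$ has a vertex of degree less than $k$: run a breadth-first search from such a vertex and reverse the resulting order. Every non-root vertex has its BFS parent later in the order, contributing at most $\deg(v)-1 \leq k-1$ earlier neighbours, while the root has at most $k-1$ neighbours in total. This is a $(k-1)$-degenerate order, and Lemma~\ref{lem:degenerate} then produces a $k$-degenerate decomposition of $H$ in $O(km)$ time.

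The remaining case is when $H$ is $k$-regular, connected, and, by hypothesis, distinct from $K_{k+1}$. Here the strategy is to locate one of the substructures handled by Lemmas~\ref{lem:ifstrongpair}--\ref{lem:iflock}. Since $H$ is connected and not a clique, it must contain at least one good pair, that is, two non-adjacent vertices with a common neighbour (otherwise $H$ would be a disjoint union of copies of $K_{k+1}$). I would iterate through a family of $O(m)$ candidate good pairs — for instance, those arising from a fixed vertex $w$ together with a short local exploration around $w$ — and for each candidate $\{u,v\}$ test in $O(km)$ time, by running a BFS in $H\setminus\{u,v\}$, whether $\{u,v\}$ is strong. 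As soon as a strong pair is found, Lemma~\ref{lem:ifstrongpair} concludes.

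If no candidate good pair turns out to be strong, then removing any good pair of $H$ disconnects the graph, a highly restrictive condition. The main obstacle is the structural claim that such a $k$-regular connected non-$K_{k+1}$ graph must then contain either a clique on $k+1$ vertices minus an edge, a clique on $k$ vertices whose neighbourhood has size~$2$, or a $(u,v)$-lock, together with an efficient procedure to locate such a substructure within the remaining $O(km^2)$ time budget. Once such a substructure has been exhibited, the corresponding one of Lemmas~\ref{lem:ifquasiclique}, \ref{lem:ifclique}, or~\ref{lem:iflock} delivers the decomposition in $O(km)$ further time. Summing the per-component cost $O(km^2)$ across all components of $G$ gives the desired $O(kn^2)$ overall running time.
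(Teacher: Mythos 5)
There is a genuine gap: the entire $k$-regular case beyond ``try some good pairs and hope one is strong'' is missing, and that is precisely where the real work of the theorem lies. You explicitly label as ``the main obstacle'' the claim that a $k$-regular connected graph other than $K_{k+1}$ in which the strong-pair route fails must contain a clique on $k+1$ vertices minus an edge, a $k$-clique with neighbourhood of size~$2$, or a lock, together with a way to find such a structure in time --- but you neither prove this claim nor give the search procedure, so the proposal does not constitute a proof. The paper does not establish this as a one-shot global dichotomy either; instead it runs an iterative descent (Algorithm~\ref{algo:1}): maintain a good pair $u,v$ and a component $C$ of $G\setminus\{u,v\}$, and show by a case analysis on the common neighbours of $u$ and $v$ in $C$ that either (i) one of Lemmas~\ref{lem:ifstrongpair}--\ref{lem:iflock} applies to the current configuration, or (ii) one can find a new good pair $u',v'$ inside $C$ for which $G\setminus\{u',v'\}$ is connected (hence $u',v'$ is strong) or has a component strictly contained in $C$. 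Since $C$ shrinks at every iteration, there are $O(n)$ iterations of cost $O(kn)$ each, which is exactly where the $O(kn^2)$ bound comes from. The key sub-arguments you are missing are the ones showing that a ``stuck'' component forces the special structures: if $C$ is a clique it must trigger Lemma~\ref{lem:ifquasiclique} or~\ref{lem:ifclique}; if $u,v$ have exactly two common neighbours and those form the only good pair in $C$, then $C$ is a clique minus an edge and either Lemma~\ref{lem:ifquasiclique} applies or removing that pair leaves the graph connected; and if $u,v$ have exactly one common neighbour $t$ in $C$ and every good pair of $C$ uses $t$, then $C$ is a lock and Lemma~\ref{lem:iflock} applies.

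There is also a smaller logical slip: from the failure of your $O(m)$ hand-picked candidate pairs (those near a fixed vertex $w$) you conclude that \emph{every} good pair of $H$ disconnects the graph. That inference is unwarranted --- non-strongness of the tested candidates says nothing about untested good pairs --- and even the stronger hypothesis ``every good pair disconnects'' would still leave you needing the structural classification and the efficient location procedure described above. Your treatment of the non-regular case (reverse BFS order from a vertex of degree less than $k$, then Lemma~\ref{lem:degenerate}) and the per-component accounting are fine and match the paper.
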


\begin{proof}
\hyphenation{component-wise}
We may assume that~$G$ is connected, otherwise it can be considered componentwise.
If~$G$ is not $k$-regular, then it has a vertex~$u$ of degree at most $k-1$, so we can find a $(k-1)$-degenerate order~${\mathcal O}$ of~$G$ in $O(kn)$ time by taking the vertices in the reverse of the order they are found in a breadth-first search from~$u$.
In this case, we are done by Lemma~\ref{lem:degenerate}.
For $k$-regular graphs, we use the procedure shown in Algorithm~\ref{algo:1}.
We note that if~$G$ is biconnected then, by~\cite[Lemma~3]{BW14}, there is always a good pair~$u$, $v$ such that $G \setminus \{u,v\}$ is connected, but this does not aid us in finding an algorithm for general graphs.

\begin{algorithm}
\LinesNumbered
\SetKw{KwGoTo}{go to}
\SetKwInOut{Input}{Input}
\SetKwInOut{Output}{Output}
\Input{A connected~$k$-regular graph~$G$}
\Output{A $k$-degenerate decomposition of~$G$}
\BlankLine
\emph{\nllabel{line:1}find a good pair~$u$, $v$ and let~$C$ be a component of $G \setminus \{u,v\}$}\;
\lIf{\nllabel{line:2}$u$, $v$ is a strong pair}{apply Lemma~\ref{lem:ifstrongpair}}
\lElseIf{\nllabel{line:4}the union of~$C$ and one or both of~$u$ and~$v$ is a clique on $k+1$ vertices minus an edge}{apply Lemma~\ref{lem:ifquasiclique}}
\lElseIf{\nllabel{line:5}$C$ is a clique on~$k$ vertices whose neighbourhood is $\{u,v\}$}{apply Lemma~\ref{lem:ifclique}}
\lElseIf{\nllabel{line:6}$C$ is a $(u,v)$-lock}{apply Lemma~\ref{lem:iflock}}
\Else{\emph{\nllabel{line:8}find a good pair $u', v' \in C$ such that either $C'=G \setminus \{u',v'\}$ is connected or $G \setminus \{u',v'\}$ has a component~$C'$ that is strictly contained in~$C$}\;
\nllabel{line:12}set $u \leftarrow u'$, $v \leftarrow v'$, $C \leftarrow C'$\;
\KwGoTo{\nllabel{line:13}Line~\ref{line:2}}
}
\caption{\label{algo:1}Finding a $k$-degenerate decomposition for connected $k$-regular graphs.}
\end{algorithm}

Let us make a few comments on this procedure.
As~$G$ is regular, connected and not complete, we can initially choose any vertex as~$u$ and find another vertex~$v$ to form a good pair in $O(k^2)=O(k n)$ time.
If we perform a breadth-first search (which takes $O(n+m)=O(kn)$ time) from a neighbour of~$u$ that retreats from~$u$ or~$v$ whenever they are encountered, we discover a component of $G \setminus \{u, v\}$.
If the component contains a common neighbour of~$u$ and~$v$ but is not equal to $G \setminus \{u,v\}$, we repeat starting from a neighbour of~$u$ or~$v$ that was not discovered.
Thus we discover in~$O(kn)$ time that either~$u$, $v$ is a strong pair (if we find all components of $G \setminus \{u,v\}$ and they each contain a common neighbour of~$u$ and~$v$), or that it is not.
We set~$C$ to be one of the components of $G \setminus \{u, v\}$ arbitrarily.
By Lemma~\ref{lem:ifstrongpair}, we therefore conclude that Lines~\ref{line:1} and~\ref{line:2} take $O(kn)$ time.
It is easy to check in $O(kn)$ time whether we apply Lemmas~\ref{lem:ifquasiclique}--\ref{lem:iflock} on Lines~\ref{line:4}--\ref{line:6} and applying these lemmas takes $O(kn)$ in each case.
Now suppose that we do not apply any of these lemmas, in which case we reach Line~\ref{line:8}.
We will show that we can find~$u'$, $v'$ and, if necessary,~$C'$ in~$O(kn)$ time.
If we find~$u'$,~$v'$ such that $C'=G \setminus \{u',v'\}$ is connected, then $u',v'$ is a strong pair, so after executing Line~\ref{line:13}, the algorithm will stop on Line~\ref{line:2}.
In all other cases~$C'$ will be strictly smaller than~$C$.
This means that we apply Line~\ref{line:13} at most~$O(n)$ times, implying that we execute Lines~\ref{line:2}--\ref{line:13} at most~$O(n)$ times.
This will give an overall running time of $O(kn^2)$.
It remains to show that if execution reaches Line~\ref{line:8} then we can find the required good pair $u',v'$ and the component~$C'$ in~$O(kn)$ time.

Let us first show that~$C$ contains good pairs --- that is, that it is not a clique.
If~$C$ is a clique, then it contains either $k-1$ or~$k$ vertices (as each vertex has degree~$k$ in~$G$ and the only other possible neighbours of vertices in~$C$ are~$u$ and~$v$).
If~$C$ has $k-1$ vertices, then each vertex of~$C$ must be adjacent to both~$u$ and~$v$ and we would have applied Lemma~\ref{lem:ifquasiclique} on Line~\ref{line:4}, a contradiction.
If~$C$ has~$k$ vertices, then each vertex of~$C$ is adjacent to exactly one of~$u$ and~$v$ (and neither~$u$ nor~$v$ can be adjacent to every vertex in~$C$, as this would form a~$K_{k+1}$, contradicting the fact that~$G$ is connected), in which case we would have applied Lemma~\ref{lem:ifclique} on Line~\ref{line:5}, a contradiction.
Therefore we may assume that~$C$ is not a clique.

We need describe how to choose a good pair~$u'$, $v'$ in~$C$.
If we can show that~$u$ and~$v$ are in the same component of $G \setminus \{u',v'\}$ (which must necessarily contain all of $G \setminus C$), then we are done as either $G\setminus \{u',v'\}$ is connected or there is another component~$C'$ of $G \setminus \{u',v'\}$ which must be contained in~$C$ (and note that in this case~$C'$ can be found in $O(kn)$ time using breadth-first search).

If~$u$ and~$v$ have a common neighbour outside~$C$ or at least three common neighbours in~$C$, then any good pair in~$C$ can be chosen as~$u'$, $v'$ (as~$u$ and~$v$ will then be in the same component of $G \setminus \{u',v'\}$).
If~$u$ and~$v$ have exactly two common neighbours~$t_1$, $t_2$ that both belong to~$C$, then any good pair other than~$t_1$, $t_2$ can be chosen as~$u'$, $v'$.
If~$t_1$, $t_2$ is the only good pair in~$C$ (so all other vertices in~$C$ are adjacent), then~$C$ is a clique minus an edge and must contain~$k$ vertices ($t_1$,~$t_2$ and the~$k-2$ neighbours of~$t_1$ that are not in $\{u,v\}$).
Considering degree, any vertex in~$C$ other than~$t_1$ or~$t_2$ must be adjacent to exactly one of~$u$ and~$v$.
If every vertex in $C \setminus \{t_1, t_2\}$ is adjacent to, say~$u$, then $C \cup \{u\}$ is a clique on $k+1$ vertices minus an edge and we would have applied Lemma~\ref{lem:ifquasiclique} on Line~\ref{line:4}, a contradiction.
We may therefore assume that at least one vertex in $C \setminus \{t_1,t_2\}$ is adjacent to~$u$ and at least one is adjacent to~$v$, so there is a path from~$u$ to~$v$ avoiding~$t_1$ and~$t_2$, and $G \setminus \{t_1, t_2\}$ is connected, so we are done.

Finally, suppose that~$u$ and~$v$ have exactly one common neighbour~$t$ that belongs to~$C$.
Then any good pair not including~$t$ can be chosen as~$u'$, $v'$, as then~$u$ and~$v$ will be in the same component of $G \setminus \{u',v'\}$.
Suppose, for contradiction, that no such pair exists.
Then $C \setminus \{t\}$ is a clique.
The vertex~$t$ has $k-2$ neighbours in $C\setminus \{t\}$.
Since $k\geq 3$, let~$z$ be one of those neighbours.
Since~$t$ is the only common neighbour of~$u$ and~$v$, we have that~$z$ can only be adjacent to at most one of~$u$ and~$v$.
Therefore, $t$ has a neighbour non-adjacent to~$z$, so~$z$ must have a neighbour non-adjacent to~$t$, which we denote~$w$.
As~$w$ is also adjacent to at most one of~$u$ and~$v$, it also has a neighbour~$x$ that is a non-neighbour of~$t$ (and cannot be~$t$ itself).
So $C \setminus \{t\}$ contains at least~$k$ vertices: the $k-2$ neighbours of~$t$ plus~$w$ and~$x$.
As $C \setminus \{t\}$ induces a clique, it must have exactly~$k$ vertices, since~$G$ cannot contain a~$K_{k+1}$.
Thus the set~$C$ forms a lock, and so we would have applied Lemma~\ref{lem:iflock} on Line~\ref{line:6}.
This contradiction completes the proof.\qedllncs
\end{proof}

Theorems~\ref{thm:cubic} and~\ref{thm:k2-deg} provide an algorithmic version of Theorem~\ref{t-special}.
Moreover, Theorems~\ref{thm:cubic} and~\ref{thm:k2-deg} concern decompositions $(A,B)$ of the vertex set of a graph where~$A$ is independent and~$B$ induces a $(k-2)$-degenerate graph.
As~$B$ therefore cannot be a clique on~$k$ vertices, both theorems also provide an algorithmic version of Theorem~\ref{t-ca79}.

\section{Hardness for Graphs of Large Maximum Degree}\label{s-hard}

Recall that Yang and Yuan~\cite{YY06} showed that the problem of deciding whether a graph of maximum degree~$4$ is near-bipartite, or equivalently, has a $3$-degenerate decomposition is \NP-complete.
We show how their proof can be adapted to prove that for every $k\geq 3$ the problem of deciding whether a graph of maximum degree~$2k-2$ has a $k$-degenerate decomposition is \NP-complete.

We start by explaining the proof of Yang and Yuan.
They use a reduction from {\sc $1$-in-$3$-Sat} with positive literals only.
That is, we are given a set of variables $X=\{x_1,\ldots,x_n\}$ and a set of clauses ${\cal C}=\{C_1,\ldots,C_m\}$ such that each clause~$C_i$ is of the form $(x_a \vee x_b \vee x_c)$ for some $a,b,c$ with $1\leq a<b<c\leq n$, and the question is whether there exists a truth assignment that makes exactly one of the three variables in each clause true (and the other two false).

Given an instance of {\sc $1$-in-$3$-Sat}, Yang and Yuan construct a graph~$G$ as follows.
First, as a building block they introduce a graph~$H(a,b,c,d,e)$ that consists of five vertices $a,b,c,d,e$ with edges $ab$, $ac$, $bc$, $bd$, $be$, $cd$, $ce$ (see also \figurename~\ref{fig:H0}).
They define the graph~$H^k$ recursively a follows:
\begin{itemize}
\item $H^0=H(a_1^0,b_1^0,c_1^0,d_1^0,e_1^0)$.
\item For $i \geq 1$, construct~$H^{i-1}$ and~$2^i$ copies of~$H$, that is, for $1 \leq j \leq 2^i$, construct $H(a_j^i,b_j^i,c_j^i,d_j^i,e_j^i)$.
Next, for $1 \leq j \leq 2^{i-1}$ identify the vertices~$d_j^{i-1}$ and~$a_{2j-1}^i$ and identify the vertices $e_j^{i-1}$ and~$a_{2j}^i$.
\end{itemize}
See \figurename~\ref{fig:H2} for the graph~$H^2$.
Observe that the graph~$H^i$ contains~$2^i$ pairwise vertex-disjoint copies of~$H$ that do not have their~$d$ and~$e$ vertices identified with~$a$ vertices.

\begin{figure}
\tikzstyle{vertex}=[circle,draw=black, fill=black, minimum size=5pt, inner sep=1pt]
\tikzstyle{edge} =[draw,-,black,>=triangle 90]
\begin{center}
\begin{tikzpicture}
\foreach \pos/\name / \label / \posn / \dist in {
{(0,0)/a/a/{above}/0},
{(-1,-1)/b/b/{left}/0},
{(1,-1)/c/c/{right}/0},
{(-1,-2)/d/d/{left}/0},
{(1,-2)/e/e/{right}/0}}
{\node[vertex] (\name) at \pos {};
\node [\posn=\dist] at (\name) {$\label$};
}

\foreach \source/ \dest in {a/b, b/c, c/a, b/d, b/e, c/d, c/e}
\path[edge, black,  thick] (\source) --  (\dest);
\end{tikzpicture}
\end{center}
\caption{\label{fig:H0}The graph~$H(a,b,c,d,e)$.}
\end{figure}
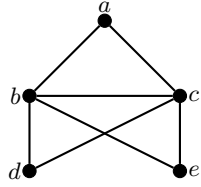

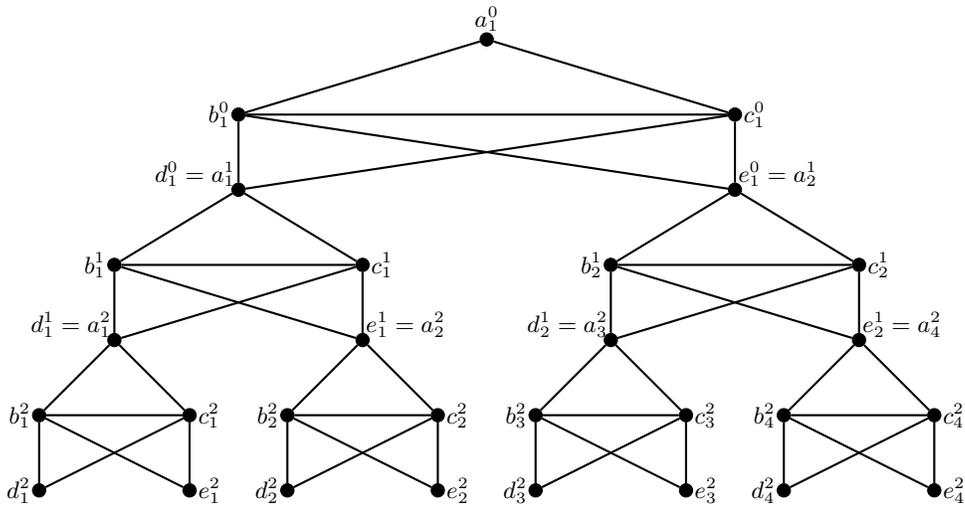
\begin{figure}
\tikzstyle{vertex}=[circle,draw=black, fill=black, minimum size=5pt, inner sep=1pt]
\tikzstyle{edge} =[draw,-,black,>=triangle 90]
\begin{center}
\begin{tikzpicture}
\foreach \pos/\name / \label / \posn / \dist in {
{(0,0)/a/a_1^0/{above}/0},
{(-3.3,-1)/b/b_1^0/{left}/0},
{(3.3,-1)/c/c_1^0/{right}/0},
{(-3.3,-2)/d/d_1^0=a_1^1/{above left}/-0.1},
{(3.3,-2)/e/e_1^0=a_2^1/{above right}/-0.1}}
{\node[vertex] (\name) at \pos {};
\node [\posn=\dist] at (\name) {$\label$};
}
\foreach \source/ \dest in {a/b, b/c, c/a, b/d, b/e, c/d, c/e}
\path[edge, black,  thick] (\source) --  (\dest);

\foreach \num/\shift in {1/d}{
\begin{scope}[shift={(\shift)}]
\foreach \pos/\name / \label / \posn / \dist in {
{(-1.65,-1)/b1\num/b^1_\num/{left}/0},
{(1.65,-1)/c1\num/c^1_\num/{right}/0},
{(-1.65,-2)/d1\num/d^1_\num=a_1^2/{above left}/-0.1},
{(1.65,-2)/e1\num/e^1_\num=a_2^2/{above right}/-0.1}}
{\node[vertex] (\name) at \pos {};
\node [\posn=\dist] at (\name) {$\label$};
}

\foreach \source/ \dest in {\shift/b1\num , \shift/c1\num, b1\num/c1\num, b1\num/d1\num, b1\num/e1\num, c1\num/d1\num, c1\num/e1\num}
\path[edge, black,  thick] (\source) --  (\dest);

\end{scope}
}

\foreach \num/\shift in {2/e}{
\begin{scope}[shift={(\shift)}]
\foreach \pos/\name / \label / \posn / \dist in {
{(-1.65,-1)/b1\num/b^1_\num/{left}/0},
{(1.65,-1)/c1\num/c^1_\num/{right}/0},
{(-1.65,-2)/d1\num/d^1_\num=a_3^2/{above left}/-0.1},
{(1.65,-2)/e1\num/e^1_\num=a_4^2/{above right}/-0.1}}
{\node[vertex] (\name) at \pos {};
\node [\posn=\dist] at (\name) {$\label$};
}

\foreach \source/ \dest in {\shift/b1\num , \shift/c1\num, b1\num/c1\num, b1\num/d1\num, b1\num/e1\num, c1\num/d1\num, c1\num/e1\num}
\path[edge, black,  thick] (\source) --  (\dest);

\end{scope}
}

\foreach \num/\shift in {1/d11,2/e11,3/d12,4/e12}{
\begin{scope}[shift={(\shift)}]
\foreach \pos/\name / \label / \posn / \dist in {
{(-1,-1)/b2\num/b^2_\num/{left}/0},
{(1,-1)/c2\num/c^2_\num/{right}/0},
{(-1,-2)/d2\num/d^2_\num/{left}/0},
{(1,-2)/e2\num/e^2_\num/{right}/0}}
{\node[vertex] (\name) at \pos {};
\node [\posn=\dist] at (\name) {$\label$};
}

\foreach \source/ \dest in {\shift/b2\num , \shift/c2\num, b2\num/c2\num, b2\num/d2\num, b2\num/e2\num, c2\num/d2\num, c2\num/e2\num}
\path[edge, black,  thick] (\source) --  (\dest);

\end{scope}
}

\end{tikzpicture}
\end{center}
\caption{\label{fig:H2}The graph~$H^2$.}
\end{figure}

Yang and Yuan prove the following observations for $i\geq 1$:

\begin{enumerate}[(i)]
\renewcommand{\theenumi}{(\roman{enumi})}
\renewcommand{\labelenumi}{(\roman{enumi})}
\item\label{prop:i}the graph~$H^i$ is near-bipartite;
\item\label{prop:ii}in any near-bipartite decomposition of~$H^i$ either all vertices~$d^i_j$ and~$e^i_j$ ($j \in \{1,\ldots,2^i\}$) belong to~$A$ or they all belong to~$B$;
\item\label{prop:iii}in any near-bipartite decomposition of~$H^i$ at least one of the vertices $b^i_j$, $c^i_j$ ($j \in \{1,\ldots,2^i\}$) belongs to $B$.
\end{enumerate}

To finish off the construction of the graph~$G$, Yang and Yuan, take~$n$ copies $H^q[x_1],\ldots,H^q[x_n]$ of the graph~$H^q$, where~$q$ is chosen such that $2^{q-1}<m\leq 2^q$, so that each copy of~$H^q[x_i]$ corresponds to the variable~$x_i$.
To distinguish variables in different copies of~$H^q$, we will refer, for example, to the copy of~$e^i_j$ in~$H^q[x_h]$ as~$e^i_j[x_h]$.
For each clause $C_j=(x_a \vee x_b \vee x_c)$ they introduce a {\em clause-triangle}, that is, a triangle with vertices $x_a^j$, $x_b^j$, $x_c^j$.
If variable~$x_h$ occurs in a clause~$C_j$, then they add an edge between~$x_h^j$ and~$d^q_j[x_h]$ and an edge between~$x_h^j$ and~$e^q_j[x_h]$.
Note that $G$ has maximum degree~$4$.

Next, Yang and Yang prove the following three observations in the case where~$G$ is near-bipartite:

\begin{enumerate}[(i)]
\renewcommand{\theenumi}{(\roman{enumi})}
\renewcommand{\labelenumi}{(\roman{enumi})}
\setcounter{enumi}{3}
\item\label{prop:iv}exactly one of the vertices of each clause-triangle belongs to~$A$;
\item\label{prop:v}if a vertex~$x_h^j$ belongs to~$A$, then every other~$x_h^{j^*}$ also belongs to~$A$; and
\item\label{prop:vi}if a vertex~$x_h^j$ belongs to~$B$, then every other~$x_h^{j^*}$ also belongs to~$B$.
\end{enumerate}

\noindent
Observation~\ref{prop:iv} holds by definition of~$A$. Observations~\ref{prop:v} and~\ref{prop:vi} follow from combining observations~\ref{prop:ii} and~\ref{prop:iii}.

Finally, using properties~\ref{prop:iv}--\ref{prop:vi}, it is straightforward to prove that there exists a truth assignment of~${\cal C}$ that makes exactly one of the three variables in each clause true if and only if~$G$ is near-bipartite; see also~\cite{YY06}.

\medskip
\noindent
{\bf Our Adjustments.} 
To prove that for $k\geq 3$ the problem of deciding whether a graph of maximum degree~$2k-2$ has a $k$-degenerate decomposition is \NP-complete, we reduce from {\sc $1$-in-$k$-SAT} with positive literals only.
This problem is readily seen to be \NP-complete via a reduction from {\sc $1$-in-$3$-SAT} with positive literals only.
We adjust the construction of Yang and Yuan in the following way:
\begin{enumerate}
\item Change the graph $H(a,b,c,d,e)$ to the graph $H(a,B,D)$, where~$B$ and~$D$ are sets of $k-1$ vertices.
To obtain this graph, replace the (adjacent) vertices~$b$ and~$c$ by a clique on the set~$B$ and replace the (non-adjacent) vertices~$d$ and~$e$ by an independent set on the set~$D$.
Make every vertex in~$B$ adjacent to every vertex in~$D$ (just as every vertex in~$\{b,c\}$ was adjacent to every vertex in~$\{d,e\}$).\\[-8pt]
\item Construct the graphs~$H^i$ using the following modified construction:
\begin{itemize}
\item Set $H^0=H(a,\{\tensor*[^1]{b}{_1^0},\ldots,\tensor*[^{(k-1)}]{b}{_1^0}\},\{\tensor*[^1]{d}{_1^0},\ldots,\tensor*[^{(k-1)}]{d}{_1^0}\})$.\\[-6pt]
\item For $i \geq 1$, construct~$H^{i-1}$ and~$(k-1)^i$ copies of~$H$, that is, for $1 \leq j \leq 2^i$, construct
$H(a,\{\tensor*[^1]{b}{_j^i},\ldots,\tensor*[^{(k-1)}]{b}{_j^i}\},\{\tensor*[^1]{d}{_j^i},\ldots,\tensor*[^{(k-1)}]{d}{_j^i}\})$.
Next, for $1 \leq j \leq (k-1)^{i-1}$, and $1 \leq \ell \leq k-1$, identify the vertices $\tensor*[^\ell]{d}{_j^{i-1}}$ and~$a_{(k-1)(j-1)+\ell}^i$.\\[-6pt]
\end{itemize}
\item Take~$H^q$ to be the vertex gadgets, where $(k-1)^{q-1}<m\leq (k-1)^q$.\\[-8pt]
\item Change each clause gadget from a triangle into a complete graph on~$k$ vertices (corresponding to the~$k$ variables of that clause).\\[-8pt]
\item Replace the two edges going from each variable vertex in each clause gadget to the corresponding~$d$ and~$e$ vertices in the corresponding variable gadget by $k-1$ edges going to the $k-1$ vertices of the independent set~$D$ replacing these~$d$ and~$e$ vertices.
\end{enumerate}

The above adjustments in the construction of Yang and Yuan~\cite{YY06} increase the maximum degree from~$4$ to $2(k-1)$.
This can be seen as follows.
First, each variable vertex in the clause gadget has $k-1$ neighbours in the corresponding variable gadget in addition to its $k-1$ neighbours in the clause gadget.
Moreover, in each variable gadget, all but one of the vertices that correspond to the vertex~$a$ in the graph~$H$ have degree $2(k-1)$, as such vertices were adjacent to vertices~$b$ and~$c$ of two different subgraphs of type~$H$ and have now been made adjacent to every vertex of two cliques~$B$ of size $k-1$.
Every vertex in a clique~$B$ that replaced a pair of vertices~$b$ and~$c$ also has $2k-2$ neighbours, namely $k-2$ neighbours in that clique, one neighbour corresponding to~$a$ and $k-1$ neighbours in the independent set~$D$ that replaced~$d$ and~$e$.
Finally, each vertex in an independent set that replaced the vertices~$d$ and~$e$, but that has not been identified with a vertex~$a$ has $k-1$ neighbours in the clique~$B$ that replaced~$b$ and~$c$ and at most one neighbour in a corresponding clause gadget, if one exists.

Note that for $k=3$ we obtain the construction of Yang and Yuan~\cite{YY06}.
By exactly the same arguments as in their proof for $k=3$, we can prove the following result that extends the result in~\cite{YY06}.

\begin{theorem}
For every $k\geq 3$, the problem of deciding whether a graph of maximum degree $2k-2$ has a $k$-degenerate decomposition is \NP-complete.
\end{theorem}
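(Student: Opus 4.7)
The plan is to establish membership in \NP{} and \NP-hardness separately. Membership is immediate, since a candidate partition can be verified in polynomial time by checking that one side is independent and that the other side admits a $(k-2)$-degenerate order (obtained by iteratively stripping vertices of degree at most $k-2$). For \NP-hardness we reduce from {\sc $1$-in-$k$-Sat} with positive literals only, which is \NP-complete for every $k\geq 3$ by a standard padding reduction from {\sc $1$-in-$3$-Sat}. Given an instance $(X,{\cal C})$, form the graph~$G$ by the construction described above.

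The next step is to reprove the analogs of Yang and Yuan's observations (i)--(iii) for the modified gadget~$H^i$. For (i), place every $a$-vertex and every vertex of every set~$D$ on the independent side, and every clique~$B$ on the degenerate side; each such $B$-vertex then has exactly $k-2$ neighbours on the degenerate side within its gadget, and since the only identifications across gadgets occur at $D$-vertices (all on the independent side), this yields a $(k-2)$-degenerate order on the whole of~$H^i$. For (ii) and (iii), note that in $H(a,B,D)$ the set $B\cup\{a\}$ is a~$K_k$, and so is $B\cup\{d\}$ for every $d\in D$; since~$K_k$ is not $(k-2)$-degenerate, each such clique must contain a vertex on the independent side. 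A short case analysis based on whether~$a$ lies on the independent or the degenerate side shows that, within any single copy of~$H$, either $a$ and all of~$D$ lie on the independent side (and $B$ is entirely on the degenerate side), or $a$ and all of~$D$ lie on the degenerate side (and exactly one vertex of~$B$ is on the independent side). Propagating this dichotomy through the recursive identification of $D$-vertices at level $i-1$ with $a$-vertices at level~$i$ then gives (ii) and (iii) for the whole of~$H^q$.

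Observations (iv)--(vi) transfer essentially as in the $k=3$ case. The clause gadget is a~$K_k$, so exactly one of its vertices lies on the independent side (at most one by independence, at least one since~$K_k$ is not $(k-2)$-degenerate). Each variable vertex of a clause gadget is joined to every vertex of a leaf $D$-set of the corresponding copy $H^q[x_h]$, and a direct degeneracy check---computing that the induced subgraph of the degenerate side on $(B\setminus\{b^*\})\cup D\cup\{x_h^j\}$ would have minimum degree~$k-1$, where~$b^*$ is the unique vertex of the leaf~$B$-clique on the independent side---shows that if that $D$-set and the variable vertex both lie on the degenerate side, the degenerate side fails to be $(k-2)$-degenerate. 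Hence the variable vertex lies on the independent side if and only if the $D$-set lies on the degenerate side, which by (ii) and the propagation argument is equivalent to every other variable vertex of~$x_h$ also lying on the independent side. Combining these, satisfying $1$-in-$k$ assignments correspond exactly to $k$-degenerate decompositions of~$G$.

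The main obstacle is the dichotomy within a single gadget together with the rigidity of the clause/$D$-set interaction. For the former, one must check that any mixed placement of~$D$ either forces a~$K_k$ onto the degenerate side (via $B$ together with a $D$-vertex placed there) or places two adjacent vertices on the independent side. For the latter, the degeneracy contradiction sketched above needs to be carried out uniformly in~$k$. Both reduce to routine degree counts, and the inductive propagation across levels of~$H^q$ is straightforward because each identified vertex appears as an $a$-vertex in exactly one higher-level copy of~$H$, so the side-consistency at each $D$-vertex is passed uniquely through the tree.
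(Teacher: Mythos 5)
Your proposal is correct and follows essentially the same route as the paper: the same reduction from {\sc $1$-in-$k$-Sat} with positive literals and the same modified Yang--Yuan gadgets (clique~$B$ and independent set~$D$ of size $k-1$, clause gadgets~$K_k$), with the correctness resting on the $k$-ary analogues of observations (i)--(vi). The paper simply delegates these analogues to ``the same arguments'' as in the $k=3$ case of Yang and Yuan, whereas you verify them explicitly via the $K_k$-counting and minimum-degree arguments, which are accurate; the only small item you leave implicit is the routine check that the constructed graph has maximum degree $2k-2$.
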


\section{Application: Reconfigurations of Vertex Colourings}\label{s-reconfig}

Our interest in finding $k$-degenerate decompositions stems from an open problem in the area of graph reconfigurations.
For a graph~$G$ and an integer~$k\geq 1$, the $k$-colouring reconfiguration graph~$R_k(G)$ has vertex set consisting of all possible $k$-colourings of~$G$ and two vertices of~$R_k(G)$ are adjacent if and only if the corresponding $k$-colourings differ on exactly one vertex.
The following problem has been the subject of much study; see e.g.~\cite{BLPP14,BC09,BMNR14,Ce07,FJP16,JKKPP16}: 

\medskip
\noindent
\emph{Given a graph~$G$ on~$n$ vertices and two $k$-colourings~$\alpha$ and~$\beta$ of~$G$, find a path (if one exists) in~$R_k(G)$ between~$\alpha$ and~$\beta$.}

\medskip
\noindent
In this section, we are concerned with determining, for every pair $k,\Delta$, the complexity of this problem on graphs~$G$ with maximum degree~$\Delta$.
In Section~\ref{s-proof} we will prove the following result using Theorem~\ref{thm:k2-deg}.

\begin{proposition}\label{prop1}
Let~$G$ be a connected graph on~$n$ vertices with maximum degree $\Delta \geq 3$.
Then it is possible to find a path in~$R_{\Delta+1}(G)$ (if one exists) between any two given $(\Delta+1)$-colourings in~$O(n^2)$ time.
\end{proposition}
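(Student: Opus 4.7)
The plan is to use Theorem~\ref{thm:k2-deg} as a black box and then reduce the reconfiguration problem to routing arbitrary $(\Delta+1)$-colourings to a canonical one via a standard degeneracy-based recolouring argument.

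First I would dispose of the exceptional case $G\cong K_{\Delta+1}$: every vertex already sees $\Delta$ pairwise distinct colours on its neighbours, so its current colour is its only available option and $R_{\Delta+1}(G)$ is edgeless; hence a path between $\alpha$ and $\beta$ exists iff $\alpha=\beta$, which is checkable in $O(n)$ time. Otherwise, invoke Theorem~\ref{thm:k2-deg} with $k=\Delta$ to obtain, in $O(n^2)$ time, a partition $(A,B)$ of $V(G)$ with $A$ independent and $G[B]$ being $(\Delta-2)$-degenerate, together with a $(\Delta-2)$-degenerate order $\mathcal{O}$ of $B$. I would then fix the canonical $(\Delta+1)$-colouring $\gamma$ that puts colour~$1$ on all of $A$ and greedily colours $B$ along $\mathcal{O}$ from $\{2,\ldots,\Delta\}$.

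The core technical claim is that from any $(\Delta+1)$-colouring $c$ of $G$ one can build in $O(n)$ time a reconfiguration path of length $O(n)$ to $\gamma$ in $R_{\Delta+1}(G)$; running this for both $\alpha$ and $\beta$ and concatenating one path with the reverse of the other yields the required $\alpha$-to-$\beta$ path. The construction has two phases. Phase~1 traverses $B$ in the reverse of $\mathcal{O}$ and, for each vertex $v$, pushes $c(v)$ toward $\gamma(v)$: since $v$ has at most $\Delta-2$ later neighbours in $B$ and at most two neighbours in $A$, the $(\Delta+1)$-colour budget exposes a free intermediate colour whenever $\gamma(v)$ is temporarily blocked, so at most two single-vertex recolourings of $v$ suffice. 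Phase~2 walks through $A$ and assigns colour $1$ to each vertex, first displacing the at-most-one $B$-neighbour that currently uses colour $1$ — again possible by the degeneracy of $G[B]$. This is essentially the recolouring scheme used for the already-settled cases in~\cite{FJP16}; Theorem~\ref{thm:k2-deg} supplies exactly the missing algorithmic ingredient.

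The main obstacle is the bookkeeping: I must verify that every intermediate colouring remains proper, that each vertex is recoloured only $O(1)$ times overall, and that the two phases do not clash. A further subtlety is diagnosing inputs that cannot be reconfigured at all (isolated vertices of $R_{\Delta+1}(G)$); outside $G\cong K_{\Delta+1}$ such ``frozen'' colourings arise only from a small family of locally recognizable substructures of $G$, and I expect they can be spotted in linear time and handled directly, so that the overall running time remains $O(n^2)$, dominated by the call to Theorem~\ref{thm:k2-deg}.
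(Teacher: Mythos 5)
Your reduction to a canonical colouring via Theorem~\ref{thm:k2-deg} does not work as stated, because the core claim -- that from \emph{any} non-frozen $(\Delta+1)$-colouring a two-phase greedy along the degeneracy order reaches $\gamma$ in $O(n)$ single-vertex recolourings -- is unjustified and in fact fails. With only $\Delta+1$ colours a vertex of degree $\Delta$ can be \emph{locked} (its neighbours show $\Delta$ distinct colours), in which case no recolouring of it is possible at all; the colouring being non-frozen only guarantees that \emph{some} vertex of $G$ is free, not the vertex your Phase~1 currently wants to move. The degeneracy order does not rescue you: it bounds the number of $B$-neighbours on one side of the order, but says nothing about the colours currently on the unprocessed $B$-neighbours or on the $A$-neighbours, and your assertion that a vertex of $B$ has ``at most two neighbours in $A$'' is simply false (a $B$-vertex may have up to $\Delta-1$ neighbours in $A$). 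Unlocking a blocked vertex generally requires cascading recolourings of its neighbours, and organising these cascades is precisely the hard part: the paper devotes Lemma~\ref{lem:p-1} to it, with Kempe-type swaps on $(j,k)$-components and a lengthy case analysis of locked/superfree vertices. Your proposal replaces that lemma by an optimistic greedy, so the central difficulty is not addressed. (As a further warning sign, an $O(n)$-length path to a canonical colouring would give linear diameter of the non-frozen component, whereas even the existential result of~\cite{FJP16} only gives $O(n^2)$.)

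The paper's actual route is different in structure and worth noting: it first uses Lemma~\ref{lem:p-1} repeatedly to \emph{empty the colour class $\Delta+1$} from both input colourings (this is where all the locked-vertex analysis lives), and only then invokes Theorem~\ref{thm:k2-deg} to obtain a partition $(S_1,S_2)$ with $S_1$ a maximal independent set; it recolours $S_1$ with colour $\Delta+1$ and recurses, by induction on $\Delta$, on the graph $G[S_2]$, which has maximum degree at most $\Delta-1$ and is coloured with the remaining $\Delta$ colours. So the decomposition is used to reduce the maximum degree and recurse, not to define a canonical target reachable by a one-pass greedy. Your treatment of the exceptional inputs ($K_{\Delta+1}$ and frozen colourings, which are easy to detect since a colouring is an isolated vertex of $R_{\Delta+1}(G)$ exactly when every vertex is locked) is fine, but note that frozen colourings exist for $\Delta$-regular graphs other than $K_{\Delta+1}$ (e.g.\ the $3$-cube with four colours), so they cannot be dismissed as arising only from one exceptional graph; fortunately detection, not classification, is all that is needed.
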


In~\cite[Theorem~6]{FJP16}, three of the authors of the current paper proved Proposition~\ref{prop1} in all cases except where the input graph~$G$ is $\Delta$-regular.
Their argument used the fact that if~$G$ has maximum degree~$\Delta$, but is not $\Delta$-regular, then~$G$ is $(\Delta-1)$-degenerate.
In this case it is possible to translate a structural result of Mih\'ok~\cite{Mi01} (also proved by Wood~\cite{Wo05}) into an $O(n^2)$-time algorithm, as shown in~\cite{FJP16}.
However, this does not work if~$G$ is $\Delta$-regular.

The following theorem and proof demonstrate that Proposition~\ref{prop1} indeed fills the gap left by past work on this problem.

\begin{theorem}\label{thm:reconfig}
Let $\Delta\geq 0$ be a (fixed) integer.
Let~$G$ be a connected graph on~$n$ vertices with maximum degree~$\Delta$.
The problem of finding a path (if one exists) between two $k$-colourings~$\alpha$ and~$\beta$ in~$R_k(G)$ is
\begin{itemize}
\item $O(n)$-time solvable if $1\leq k\leq 3$;\\[-11pt]
\item $O(n^2)$-time solvable if $k\geq 4$ and $0\leq \Delta\leq k-1$;\\[-11pt]
\item \emph{PSPACE}-hard if $k\geq 4$ and $\Delta\geq k$.
\end{itemize}
\end{theorem}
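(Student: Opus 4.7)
The plan is to dispose of the three regimes of the statement separately; two of them follow from citations and only one requires the new work of the paper.

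For $k\ge 4$ and $\Delta\ge k$, PSPACE-hardness is inherited from existing results. The base case $k=4$, even restricted to planar bipartite graphs of bounded maximum degree, is a theorem of Bonsma and Cereceda~\cite{BC09}, and the classification of~\cite{FJP16} already lifted this to every pair $(k,\Delta)$ with $k\ge 4$ and $\Delta\ge k$ via standard padding gadgets, so nothing new needs to be done here. For $k\in\{1,2,3\}$, I would rely on classical linear-time algorithms: the cases $k=1$ and $k=2$ amount to checking proper colourability and that $\alpha,\beta$ agree (up to colour swap) on each connected component, both doable in $O(n)$ time; the case $k=3$ is handled by the linear-time algorithm of Cereceda, van den Heuvel and Johnson, as noted in~\cite{FJP16}.

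For $k\ge 4$ and $0\le \Delta\le k-1$ I would split according to whether $G$ is $\Delta$-regular. If $G$ is not $\Delta$-regular, then $G$ is $(\Delta-1)$-degenerate: any induced subgraph $H$ of a connected graph of maximum degree $\Delta$ that is not $\Delta$-regular must contain a vertex of degree strictly less than $\Delta$ in $H$ (otherwise $H$ would be $\Delta$-regular but, by connectivity, some vertex of $H$ has a neighbour outside $H$, raising its degree in $G$ above $\Delta$). Since $\Delta-1\le k-2$, this means $G$ is $(k-2)$-degenerate, and the structural result of Mih\'ok~\cite{Mi01} (independently Wood~\cite{Wo05}) together with its algorithmic implementation in~\cite[Theorem~6]{FJP16} gives an $O(n^2)$-time reconfiguration algorithm. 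The same $(k-2)$-degeneracy argument covers any $\Delta$-regular $G$ with $\Delta\le k-2$, since every vertex then has degree at most $k-2$. The only remaining case is $G$ being $\Delta$-regular with $\Delta=k-1$, in which case $\Delta\ge 3$ and Proposition~\ref{prop1} applies directly to give the required $O(n^2)$-time algorithm.

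The main (and only) genuinely new ingredient is thus Proposition~\ref{prop1}, whose proof is deferred to Section~\ref{s-proof} and relies on the $k$-degenerate decomposition algorithm of Theorem~\ref{thm:k2-deg}. The obstacle that had been left open by~\cite{FJP16} is precisely the $\Delta$-regular case of $k=\Delta+1$, and resolving it through the decomposition algorithm completes the classification; everything else in the theorem is assembly of known results and routine case analysis.
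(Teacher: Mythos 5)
Your overall assembly is the same as the paper's: the PSPACE-hardness and the $k\le 3$ cases are cited from earlier work, the case $k\ge 4$, $\Delta\le k-2$ is covered by known results on degenerate graphs (the paper cites~\cite{Ce07} directly, whereas you route it through the degeneracy observation, Mih\'ok~\cite{Mi01}/Wood~\cite{Wo05} and \cite[Theorem~6]{FJP16}; the two routes are equivalent in substance), and the only genuinely new case, $k\ge 4$ with $\Delta=k-1$, is delegated to Proposition~\ref{prop1} exactly as in the paper. Note that Proposition~\ref{prop1} is stated for \emph{all} connected graphs of maximum degree $\Delta\ge 3$, regular or not, so your further split of the $\Delta=k-1$ case into regular and non-regular subcases is unnecessary, though harmless.

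Two details need repair. For $k=2$, your criterion that $\alpha$ and $\beta$ should ``agree up to colour swap'' on each component is wrong: in a connected graph containing at least one edge, no vertex can ever be recoloured in a proper $2$-colouring, so on such a component a path between $\alpha$ and $\beta$ exists only if the two colourings are \emph{identical}; the two colourings obtained from each other by swapping the colour classes are distinct isolated vertices of $R_2(G)$. (Swapping is only available on components that are single vertices.) As stated, your $k=2$ algorithm would report a path where none exists. For $k=3$, the algorithm of Cereceda, van den Heuvel and Johnson is polynomial but not linear; the $O(n)$ (more precisely $O(n+m)$) bound claimed in the theorem comes from~\cite{JKKPP16}, which is the reference the paper uses. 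With these two points corrected, your proof coincides with the paper's.
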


\begin{proof}
In~\cite{JKKPP16}, the problem was shown to be solvable in $O(n+m)$ time on (general) graphs with~$n$ vertices and~$m$ edges for $k\leq 3$.
An~$O(n^2)$ time algorithm for the case where $k\geq 4$, $0\leq \Delta\leq k-2$ was presented in~\cite{Ce07}.
In~\cite{BC09}, PSPACE-hardness for $k\geq 4$, $\Delta\geq k$ was proved.
This leaves only the case where $k\geq 4$ and $\Delta=k-1$, which follows from Proposition~\ref{prop1}.\qedllncs
\end{proof}

\subsection{The Proof of Proposition~\ref{prop1}}\label{s-proof}

It was already known~\cite[Theorem~2]{FJP16} that for every connected graph on~$n$ vertices with maximum degree $\Delta \geq 3$, there is a path of length~$O(n^2)$ between any two given $(\Delta+1)$-colourings in~$R_{\Delta+1}(G)$ unless one or both of the $(\Delta+1)$-colourings is an isolated vertex in~$R_{\Delta+1}(G)$.
To prove Proposition~\ref{prop1}, we have to show how to {\em find} such paths between colourings in~$R_{\Delta+1}(G)$ in~$O(n^2)$ time.
Apart from using Theorem~\ref{thm:k2-deg}, this requires us to replace several structural lemmas of~\cite{FJP16} by their algorithmic counterparts.
As we have also managed to simplify some of the arguments from~\cite{FJP16}, we present a self-contained proof of Proposition~\ref{prop1} in this section.

At several places in the proof of Proposition~\ref{prop1} we seek to show that from some given colouring~$\alpha$ of a graph~$G$, we can find a path in~$R_{\Delta+1}(G)$ to another colouring with some specified property.
Rather than explicitly referring to paths in~$R_{\Delta+1}(G)$, we think, equivalently, in terms of \emph{recolouring} vertices of~$G$ one by one in order to turn~$\alpha$ into the colouring we require.

We now define a number of terms that we will use to describe vertices of~$G$ with respect to a $(\Delta+1)$-colouring~$\alpha$.
A vertex~$v$ is \emph{locked} by~$\alpha$ if~$\Delta$ distinct colours appear on its neighbours; note that in this case every neighbour of~$v$ has a unique colour.
A vertex that is not locked is \emph{free}.
Clearly a vertex can be recoloured only if it is free.
A vertex~$v$ is \emph{superfree} if there is a colour $c \neq \Delta+1$ such that neither~$v$ nor any of its neighbours is coloured~$c$, that is, $c\neq \alpha(u)$ for any~$u$ in the closed neighbourhood $N[v]=\{u\; |\; uv\in E\}\cup \{v\}$ of~$v$.
A vertex~$v$ can be recoloured with a colour other than $\Delta+1$ if and only if~$v$ is superfree.
For any two colours $j,k \in \{1,\ldots,\Delta+1\}$, a \emph{$(j,k)$-component} is a connected component in the subgraph of~$G$ induced by the vertices coloured~$j$ or~$k$.
As we continually recolour, these terms should be assumed to be used with respect to the current colouring unless specified otherwise.
We let~$L_\alpha$ denote the set of vertices~$u$ with colour $\alpha(u)=\Delta+1$.
We say that we {\em compact}~$\alpha$ if we determine a path in~$R_{\Delta+1}(G)$ from~$\alpha$ to a $(\Delta+1)$-colouring~$\alpha^*$ of~$G$ with $|L_{\alpha^*}|< |L_{\alpha}|$.
We say that~$\alpha$ has the {\em lock-property} if, for every $u\in L_\alpha$ and every~$v\in N[u]$, we have that~$v$ is locked.

The following lemma is crucial.

\begin{lemma}\label{lem:p-1}
Let~$G$ be a connected graph on~$n$ vertices with maximum degree $\Delta \geq 3$ that is not isomorphic to~$K_{\Delta+1}$.
Let~$\alpha$ be a $(\Delta+1)$-colouring of~$G$ that is not an isolated vertex in~$R_{\Delta+1}(G)$ such that $L_\alpha\neq \emptyset$.
Then it is possible to compact~$\alpha$ in~$O(n)$ time.
\end{lemma}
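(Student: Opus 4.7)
The plan is to proceed by a case analysis on how obstructed the vertices of $L_\alpha$ are with respect to~$\alpha$. The easiest case is when some $u \in L_\alpha$ is itself free: then at most $\Delta-1$ distinct colours appear on its neighbours, so some $c \in \{1,\dots,\Delta\}$ is missing, and recolouring $u$ to $c$ immediately compacts~$\alpha$. Finding such a $u$ and $c$ takes a single scan through $L_\alpha$ and its neighbourhoods in $O(n)$ time.

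The next case is when every vertex of $L_\alpha$ is locked but some $u\in L_\alpha$ has a free neighbour $v$. Since $u$ is locked, its $\Delta$ neighbours carry each colour of $\{1,\dots,\Delta\}$ exactly once; and since $v$ is free and has $u$ (of colour $\Delta+1$) as a neighbour, some $c \in \{1,\dots,\Delta\}\setminus\{\alpha(v)\}$ is missing from $N(v)$ (and necessarily $c\neq \Delta+1$). Recolouring $v$ to $c$ removes $\alpha(v)$ entirely from $N(u)$, after which $u$ can be recoloured to $\alpha(v)$. This uses only two recolourings after an $O(n)$ scan.

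The hard case is when $\alpha$ has the lock-property. Because $\alpha$ is not isolated in $R_{\Delta+1}(G)$, there exists some free vertex~$w$ of $G$, and the lock-property forces $w$ to lie outside $N[L_\alpha]$. I would attack this case via Kempe chains at any chosen $u \in L_\alpha$ with neighbours $v_1,\dots,v_\Delta$ of distinct colours $1,\dots,\Delta$: for each pair $i \neq j$, consider the $(i,j)$-component $C_{ij}$ that contains $v_i$. If $C_{ij}$ does not contain $v_j$, swapping the colours $i$ and $j$ on $C_{ij}$ eliminates the colour $i$ from $N(u)$, so $u$ can then be recoloured to $i$; since each locked vertex has exactly two neighbours in $C_{ij}$, such a Kempe chain and swap can be computed in time proportional to its size. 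Otherwise, the lock-property forces $C_{ij}$ to be a $v_i$-$v_j$ path whose internal vertices are all locked.

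The main obstacle is to handle the subcase where every chain $C_{ij}$ is such a rigid path: here $N[u]$ is forced into a near-clique configuration. I expect to use the hypothesis $G \not\cong K_{\Delta+1}$ together with the existence of a free vertex $w$ outside $N[L_\alpha]$ to produce a controlled propagation of recolourings from~$w$ toward $N(L_\alpha)$. Concretely, I would run a search from $w$ through free vertices; each time a free vertex is recoloured to a colour shared with exactly one locked neighbour, that neighbour may become free on the next step, extending the frontier of free vertices one step closer to $L_\alpha$. Since the lock-property prevents backtracking into $N[L_\alpha]$ until the frontier reaches it, the search terminates after touching each vertex $O(1)$ times, producing in $O(n)$ time a free vertex in $N(L_\alpha)$, at which point the second case applies and compaction follows. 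The delicate point is arranging the search so that total work and total number of recolourings are both linear, which I would ensure by visiting each vertex at most a constant number of times during the propagation.
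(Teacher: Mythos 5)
Your first two cases are fine and match the easy part of the argument (if some vertex of $N[L_\alpha]$ is free, two recolourings compact $\alpha$). The gap is in the lock-property case, which is the entire substance of the lemma. First, a Kempe swap on a $(i,j)$-component is not an atomic move in $R_{\Delta+1}(G)$: it must be realised as a sequence of single-vertex recolourings, each proper. The only general way to do this with the spare colour is to first recolour one colour class of the component to $\Delta+1$, which is only legal if no vertex of that class has a neighbour in $L_\alpha$ — and $v_i$ itself is adjacent to $u\in L_\alpha$ and locked, so it cannot be recoloured at all until the chain is broken elsewhere. This is exactly the hypothesis the paper's swap claim carries, and engineering situations where it holds (via superfree vertices and the structure of the chains) is the bulk of the proof; your sketch treats the swap as free. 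Second, your description of the rigid subcase is wrong in a telling way: the lock-property does not force $C_{ij}$ to be a $v_i$--$v_j$ path, and when it is a path its internal vertices are \emph{free}, not locked (a vertex with two neighbours of the same colour sees at most $\Delta-1$ colours). Showing that a non-path chain, or a chain containing a superfree vertex, already allows compaction is itself a lemma you would need.

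Third, the ``propagation from $w$ toward $N(L_\alpha)$'' plan does not address the actual difficulty. Under the lock-property every locked vertex outside $L_\alpha$ is adjacent to $L_\alpha$, so a free vertex at distance two from $L_\alpha$ exists immediately — reaching $N(L_\alpha)$ is not the problem. The problem is that a free vertex which is not superfree can only be recoloured to $\Delta+1$, which temporarily \emph{increases} $|L_\alpha|$ and must be paid back by freeing two vertices of $L_\alpha$ (or one, plus undoing the debt), and recolouring a neighbour of a locked vertex need not make that vertex recolourable to a useful colour. Your sketch says the locked neighbour ``may become free'' and asserts termination without a potential argument; nothing prevents the process from cycling or stalling, and no bound on the number of recolourings is established. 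The paper resolves precisely this situation by a careful analysis of the $(1,2)$-, $(2,1)$-, $(2,3)$- and $(1,3)$-chains at a single $u_1$: it shows two distinct rigid chains cannot share a free vertex, uses a swap on the $(2,3)$-component to produce a second colouring, and derives that some chain fails to be rigid, after which the superfree-vertex argument applies — together with an explicit accounting that the number of $(\Delta+1)$-coloured vertices drops by one. Your proposal contains no substitute for these steps, so the lemma is not proved as written.
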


\begin{proof}
\setcounter{ctrclaim}{0}
We note that~$G$ has~$O(n)$ edges as~$\Delta$ is a fixed constant.
Let $L_\alpha=\{u_1,\ldots,u_p\}$ for some $p\geq 1$.
We first prove a series of claims, which will enable us to deal with several special cases.

\clm{\label{clm2:1}Given a free vertex~$v$ in~$N[u_i]$ for some $i\in \{1,\ldots,p\}$, we can compact~$\alpha$ in~$O(1)$ time.}
We prove Claim~\ref{clm2:1} as follows.
First suppose that~$u_i$ is free.
Let $c \neq \Delta+1$ be a colour not used on~$N[u_i]$.
We can determine~$c$ in~$O(1)$ time, as~$\Delta$ is a constant.
Then we can recolour~$u_i$ with colour~$c$.
Now assume that~$u_i$ is not free, that is, $u_i$ is locked.
Then~$v$ is a neighbour of~$u_i$.
We determine a colour~$c'$ not used on~$N[v]$ in~$O(1)$ time such that $c' \neq \Delta+1$.
We recolour~$v$ with colour~$c'$, and now~$u_i$ is free and can be recoloured with $\alpha(v) \neq \Delta+1$.
Hence, we have compacted~$\alpha$ in~$O(1)$ time.
This proves Claim~\ref{clm2:1}.

\clm{\label{clm2:2}Let~$j$ and~$k$ be distinct colours in $\{1,\ldots,\Delta\}$.
If a $(j,k)$-component~$D$ is such that no vertex coloured~$j$ in~$D$ has a neighbour in~$L_\alpha$, then in~$O(|V(D)|)$ time we can recolour~$G$ from~$\alpha$ to the $(\Delta+1)$-colouring~$\alpha'$ with $\alpha'(v)=\alpha(v)$ if $v\notin V(D)$ and $\alpha'(v)=j+k-\alpha(v)$ if $v \in D$ (that is, the colours on~$D$ are \emph{swapped}).}
We prove Claim~\ref{clm2:2} as follows.
We recolour all vertices of~$D$ that have colour~$j$ with colour~$\Delta+1$.
This yields a new $(\Delta+1)$-colouring, as none of these vertices has a neighbour in~$L_\alpha$.
We then recolour all vertices of~$D$ that have colour~$k$ with the colour~$j$.
Again this is a valid operation as, by the choice of~$D$, all their neighbours that were given colour~$j$ by~$\alpha$ are now coloured $\Delta+1$.
We finally recolour all vertices of~$D$ that were given colour~$j$ by~$\alpha$ (and so have $\Delta+1$ in the current colouring) with the colour~$k$.
This yields the $(\Delta+1)$-colouring~$\alpha'$.
Moreover, the running time of doing this is linear in the size of~$D$.
This proves Claim~\ref{clm2:2}.

\medskip
\noindent
Suppose that~$\alpha$ has the lock-property.
For every $i \in \{1,\ldots,p\}$ and every $j \in \{1,\ldots,\Delta\}$, we denote the unique neighbour of~$u_i$ coloured~$j$ by~$v_{i,j}$.
For every $i \in \{1,\ldots,p\}$ and every $j, k \in \{1,\ldots,\Delta\}$ ($j \neq k$), the graph~$G_i^{j,k}$ is the $(j,k)$-component containing~$v_{i,j}$.
We note that~$G_i^{j,k}$ and~$G_i^{k,j}$ may or may not be equal.
These definitions are with respect to~$\alpha$ unless stated otherwise.
We will write, for example, $\tensor*[^\alpha]{G}{*_i^{j,k}}$ if we need to specify the colouring.

\clm{\label{clm2:3}Suppose that~$\alpha$ has the lock-property.
Let~$i$ be in $\{1,\ldots,p\}$ and let~$j$ and~$k$ be two distinct colours in $\{1,\ldots,\Delta\}$.
If~$G_i^{j,k}$ is \emph{not} a path where each end-vertex is locked and no vertex is superfree, then we can compact~$\alpha$ in~$O(n)$ time.}
We start with two observations.
If a vertex $v\in V(G_i^{j,k})$ has degree at least~$2$ in~$G_i^{j,k}$, then~$v$ has two neighbours with the same colour, so it is free.
By the lock-property, for all $\ell \in \{1,\ldots,p\}$, every vertex in~$N[u_\ell]$ is locked.
Hence~$v$ is not adjacent to a vertex in~$L_\alpha$.
If a vertex $v\in V(G_i^{j,k})$ has degree at least~$3$ in~$G_i^{j,k}$, then~$v$ is also superfree.

As~$v_{i,j}$ is in~$N[u_i]$, by the lock-property, we find that~$v_{i,j}$ is locked.
Hence~$v_{i,j}$ has degree~$1$ in~$G_i^{j,k}$.
We now perform a breadth-first search in~$G_i^{j,k}$ starting from~$v_{i,j}$.
This takes~$O(n)$ time, as~$\Delta$ is a constant.
We stop if we find a superfree vertex~$w$ or else if we have visited all vertices of~$G_i^{j,k}$.

First suppose that we found a superfree vertex~$w$.
As all vertices closer to~$v_{i,j}$ than~$w$ in~$G_i^{j,k}$ are not superfree, they must have degree~$2$ and form a path~$P$ from~$v_{i,j}$ to~$w$.
As the internal vertices of~$P$ have degree~$2$ in~$G_i^{j,k}$, they are free and so have no neighbour in~$L_\alpha$ by the lock-property.
Since~$w$ is superfree, there is some colour $c \neq \Delta+1$ with which we can recolour~$w$.
As colours~$j$ and~$k$ appear on~$N[w]$, we find that $c \not\in \{j,k\}$.
After we recolour~$w$ with~$c$, we note that~$G_i^{j,k}$ (defined with respect to this new colouring) is a $(j,k)$-component where no vertex coloured~$k$ has a neighbour in~$L_\alpha$.
We apply Claim~\ref{clm2:2} and note that now~$u_i$ has no neighbour coloured~$j$.
We therefore recolour~$u_i$ with~$j$ and have compacted~$\alpha$ in~$O(n)$ time.

Now suppose that we found that no vertex in~$G_i^{j,k}$ is superfree.
Then no vertex of~$G_i^{j,k}$ has degree more than~$2$.
Hence~$G_i^{j,k}$ is a path or cycle.
Since~$v_{i,j}$ has degree~$1$, we find that~$G_i^{j,k}$ must be a path.
Let~$z$ be the end-vertex of~$G_i^{j,k}$ other than~$v_{i,j}$.
As no vertex of the path~$G_i^{j,k}$ is superfree and~$v_{i,j}$ is locked, $z$ must be free by the assumption of the claim.
Hence, every vertex of~$G_i^{j,k}$ apart from~$v_{i,j}$ is free, which means that no vertex of~$G_i^{j,k}$ with colour~$k$ has a neighbour in~$L_\alpha$ by the lock-property.
We apply Claim~\ref{clm2:2}.
Afterwards we can recolour~$u_i$ with colour~$j$ to compact~$\alpha$ in~$O(n)$ time.
This proves Claim~\ref{clm2:3}.

\clm{\label{clm2:4}Suppose that~$\alpha$ has the lock-property.
Let~$i_1$ and~$i_2$ be in $\{1,\ldots,p\}$ (we allow the case $i_1=i_2$) and let $j_1$, $j_2$, $k_1$ and~$k_2$ be in $\{1,\ldots,\Delta\}$, $j_1 \neq k_1$, $j_2 \neq k_2$.
If~$G_{i_1}^{j_1,k_1}$ and~$G_{i_2}^{j_2,k_2}$ are two distinct paths, each with locked end-vertices and no superfree vertices, then~$G_{i_1}^{j_1,k_1}$ and~$G_{i_2}^{j_2,k_2}$ do not intersect on a free vertex.}
We prove Claim~\ref{clm2:4} as follows.
For contradiction, suppose that~$G_{i_1}^{j_1,k_1}$ and~$G_{i_2}^{j_2,k_2}$ intersect on a free vertex~$w$.
Then~$w$ is an internal vertex on each of~$G_{i_1}^{j_1,k_1}$ and~$G_{i_2}^{j_2,k_2}$.
Moreover, the two neighbours of~$w$ in~$G_{i_1}^{j_1,k_1}$ are disjoint from the two neighbours of~$w$ in~$G_{i_2}^{j_2,k_2}$, otherwise $\{j_1,k_1\}=\{j_2,k_2\}$, in which case $G_{i_1}^{j_1,k_1}=G_{i_2}^{j_2,k_2}$, a contradiction.
Thus~$w$ has four neighbours that use only two colours between them.
So~$w$ has at most $\Delta-2$ colours in its neighbourhood.
This means that~$w$ is superfree, a contradiction.
This proves Claim~\ref{clm2:4}.

\clm{\label{clm2:5}Suppose that~$\alpha$ has the lock-property.
Let~$i$ be in $\{1,\ldots,p\}$ and let~$j$ and~$k$ be in $\{1,\ldots,\Delta\}$ with $j \neq k$.
If~$G_i^{j,k}$ and~$G_i^{k,j}$ are two distinct paths, each with locked end-vertices and no superfree vertices, and moreover, $G_i^{j,k}$ does not contain exactly two vertices, then it is possible to compact~$\alpha$ in~$O(n)$ time.}
We prove Claim~\ref{clm2:5} as follows.
Since $v_{i,j}\in N[u_i]$ is locked by the lock-property, $v_{i,j}$ has exactly one neighbour~$s$ coloured~$k$.
This means that~$G_i^{j,k}$ has at least two vertices, and thus at least three vertices by the assumption of the claim.
Hence~$s$ has another neighbour with colour~$j$.
This means that~$s$ is free (note that~$s$ is not superfree by the assumption of the claim).

Let~$t$ be a neighbour of~$s$ not in~$G_i^{j,k}$.
Recall that $\Delta \geq 3$, so~$t$ exists.
Let $c=\alpha(t)$, and since~$s$ is free, $c \neq \Delta+1$ by the lock-property.
As~$t$ does not belong to~$G_i^{j,k}$, we find that $c\notin \{j,k\}$.

First suppose that~$t$ is locked.
Then, by definition, $t$ has a neighbour with colour~$\Delta+1$.
Let $u_h \in L_\alpha$ be this neighbour for some $h \in \{1,\ldots,p\}$.
If $h=i$, then~$G_i^{j,k}$ and~$G_i^{c,k}$ intersect on the free vertex~$s$, contradicting Claim~\ref{clm2:4}.
Hence, $h \neq i$.
Then we recolour~$s$ with~$\Delta+1$.
This is possible, as~$s$ is free and any vertices adjacent to vertices in~$L_\alpha$ are locked by the lock-property.
We then recolour~$v_{i,j}$ with colour~$k$.
As~$t$ is locked by~$\alpha$, we find that~$\alpha$ has coloured exactly one neighbour of~$t$ with colour~$j$ and exactly one neighbour (namely~$s$) with colour~$k$.
If~$t$ is adjacent to~$v_{i,j}$ then this is the neighbour of~$t$ that is coloured~$j$ by~$\alpha$; in this case we recolour~$t$ with colour~$j$.
If~$t$ is non-adjacent to~$v_{i,j}$ then we recolour~$t$ with colour~$k$.
Thus we can recolour~$t$ either with colour~$j$ or with colour~$k$.
Now both~$u_h$ and~$u_i$ are free and can be recoloured.
In this way in~$O(n)$ time we have reduced the total number of vertices coloured $\Delta+1$ by $2-1=1$, that is, we have compacted~$\alpha$.

Now suppose that~$t$ is free.
As the end-vertex of~$G_i^{j,k}$ other than~$v_{i,j}$ is locked by assumption of the claim, it has a neighbour $u_\ell\in L_\alpha$.
If $\ell = i$, then $G_i^{j,k}=G_i^{k,j}$, which is not possible by the assumption of the claim.
Hence we have $\ell \neq i$.
We let~$z_1$ be the neighbour of~$t$ on~$G_i^{j,k}$ closest to~$u_i$, and we let~$z_2$ be the neighbour of~$t$ on~$G_i^{j,k}$ closest to~$u_\ell$.
Note that $z_1=z_2=s$ is possible.
We now recolour~$t$ with colour $\Delta+1$.
This is possible, as~$t$ is free and thus has no neighbour in~$L_\alpha$ by the lock-property.
As~$z_2$ is not superfree, $z_2$ is either locked or free.
In the first case~$\alpha$ colours each neighbour of~$z_2$ with a unique colour.
In the second case~$z_2$ is an inner vertex of~$G_i^{j,k}$ and, as~$z_2$ is not superfree, $\alpha$ colours exactly two neighbours of~$z_2$ with the same colour, which is either~$j$ or~$k$.
Hence, $t$ is the only neighbour of~$z_2$ which~$\alpha$ colours with colour~$c$.
Similarly, $t$ is the only neighbour of~$z_1$ with colour~$c$.
Hence, after recolouring~$t$ with colour $\Delta+1$, we can recolour~$z_1$ and~$z_2$ with~$c$.
All the internal vertices in~$G_i^{j,k}$ are free, so by the lock-property, they have no neighbours in~$L_\alpha$.
Next we apply Claim~\ref{clm2:2} to the two obtained subpaths of $G_i^{j,k} \setminus \{z_1, z_2\}$ containing neighbours of~$u_i$ and~$u_\ell$, respectively.
Afterwards, both~$u_i$ and~$u_\ell$ can be recoloured.
In this way in~$O(n)$ time we have reduced the number of vertices coloured~$\Delta+1$ by $2-1=1$, that is, we have compacted~$\alpha$.
Hence we have proven Claim~\ref{clm2:5}.

\medskip
\noindent
{\bf Our Algorithm.}
If~$\alpha$ does not have the lock-property then we are done by Claim~\ref{clm2:1}.
We may therefore assume that~$\alpha$ has the lock-property.
As~$\alpha$ is not an isolated vertex in~$R_{\Delta+1}(G)$, $G$ has at least one free vertex~$x$.
Let~$P$ be a shortest path in~$G$ from~$x$ to a vertex in~$L_\alpha$, say to~$u_1$.
We may assume that~$x$ is chosen such that~$x$ is the free vertex on~$P$ closest to~$u_1$.
Then every internal vertex of~$P$ is locked and coloured with a colour in $\{1,\ldots,\Delta\}$.
Moreover, the only vertex of~$P$ with a neighbour in~$L_\alpha$ is the neighbour of~$u_1$.
By definition, any locked vertex not in~$L_\alpha$ has a neighbour with colour~$\Delta+1$, so it is adjacent to a vertex in~$L_\alpha$.
Hence~$P$ has at most two edges.
As~$x$ is free, but by the lock-property every vertex in~$N[u_1]$ is locked, it follows that~$u_1$ and~$x$ are not adjacent.
Therefore, $P$ contains exactly two edges.
Without loss of generality, we may assume that the middle vertex is~$v_{1,1}$, which has colour~$1$ by definition, and that~$x$ is coloured~$2$.
In particular note that in this case $x\in V(G_1^{1,2})$.

We may assume that both~$G_1^{1,2}$ and~$G_1^{2,1}$ are paths whose end-vertices are locked and that do not contain any superfree vertices, otherwise we apply Claim~\ref{clm2:3} and are done.
As~$G_1^{1,2}$ is a path whose end-vertices are locked and $x\in V(G_1^{1,2})$ is free, $G_1^{1,2}$ has at least three vertices.
Hence we may assume that $G_1^{1,2}=G_1^{2,1}$, otherwise we may apply Claim~\ref{clm2:5} and are done.

Let~$H^{2,3}$ be the maximal $(2,3)$-component of~$G$ containing~$x$.
As~$x$ is not superfree and has two neighbours coloured~$1$, $x$ has only one neighbour coloured~$3$.
Hence~$x$ has degree~$1$ in~$H^{2,3}$.
If~$H^{2,3}$ is not a path, then let~$w$ be the vertex with degree at least~$3$ in~$H^{2,3}$ that is closest to~$x$.
As~$w$ has three neighbours coloured alike, $w$ is superfree.
This means we can recolour~$w$ with a colour $c \neq \Delta+1$.
Note that $c \notin \{2,3\}$.

This recolouring of~$w$ may have altered the graph~$G_1^{1,2}$ only if $c=1$.
If~$G_1^{1,2}$ is no longer a path, then we apply Claim~\ref{clm2:3} and are done in~$O(n)$ time.
Hence suppose that~$G_1^{1,2}$ is (still) a path and note that the recolouring of~$w$ also changed~$H^{2,3}$ into a path (if~$H^{2,3}$ was not already a path).
For simplicity, we still call the current colouring~$\alpha$.

The internal vertices of the path~$H^{2,3}$ each have two neighbours coloured alike.
Hence, they are not locked and therefore all but at most one vertex of~$H^{2,3}$ is free.
By the lock-property, every neighbour of a vertex in~$L_\alpha$ is locked.
Consequently, no internal vertex of~$H^{2,3}$ has a neighbour in~$L_\alpha$.
Let~$x'$ be the end-vertex of~$H^{2,3}$ other than~$x$.
Then~$\alpha(x')$ is either~$2$ or~$3$.
If $\alpha(x')=2$, then we apply Claim~\ref{clm2:2} with $j=3$ and $k=2$.
If $\alpha(x')=3$, then we apply Claim~\ref{clm2:2} with $j=2$ and $k=3$ (as~$x$ is free and thus has no neighbour in~$L_\alpha$ either).
Let~$\beta$ be the resulting colouring.

We now proceed by applying a similar procedure to~$\beta$ to the one we applied to~$\alpha$.
If~$\beta$ does not have the lock-property, then we are done by Claim~\ref{clm2:1}, so we may assume that~$\beta$ does have the lock-property.
Recall that~$v_{1,2}$ and~$v_{1,3}$ are locked by~$\alpha$.
Since at most one vertex in~$H^{2,3}$ is locked by~$\alpha$, it follows that at most one vertex in $\{v_{1,2},v_{1,3}\}$ is in~$H^{2,3}$.
If~$v_{1,2}$ is in~$H^{2,3}$, but~$v_{1,3}$ is not then $\beta(v_{1,2})=\beta(v_{1,3})=3$, so~$u_1$ is not locked by~$\beta$, contradicting the lock-property.
Therefore~$v_{1,2}$ is not in~$H^{2,3}$, and similarly~$v_{1,3}$ is not in~$H^{2,3}$, so $\beta(v_{1,2})=2$ and $\beta(v_{1,3})=3$.
We may assume that~$\tensor*[^\beta]{G}{*_1^{1,2}}$ and~$\tensor*[^\beta]{G}{*_1^{2,1}}$ are paths whose end-vertices are locked and that do not contain any superfree vertices, otherwise we apply Claim~\ref{clm2:3} and are done.
Note that $\tensor*[^\alpha]{G}{*_1^{1,2}}=\tensor*[^\alpha]{G}{*_1^{2,1}}$ consists of a path which contains the vertices~$v_{1,1}$, $x$, and~$v_{1,2}$ in that order and~$\alpha$ colours these vertices $1$, $2$ and~$2$, respectively.
Therefore~$v_{1,2}$ must have a neighbour~$v_{1,2}'$ in~$\tensor*[^\alpha]{G}{*_1^{2,1}}$ that~$\alpha$ colours with colour~$1$, and this must be an internal vertex of~$\tensor*[^\alpha]{G}{*_1^{2,1}}$, so by the lock-property, it cannot be adjacent to a vertex in~$L_\alpha$.
Since $L_\alpha=L_\beta$ it follows that~$v_{1,2}'$ has no neighbours in~$L_\beta$, and so it must be free in~$\beta$.
As~$\tensor*[^\beta]{G}{*_1^{2,1}}$ is a path whose end-vertices are locked and $v_{1,2}\in V(\tensor*[^\beta]{G}{*_1^{2,1}})$ is free, $\tensor*[^\beta]{G}{*_1^{2,1}}$ has at least three vertices.
Thus we may assume that $\tensor*[^\beta]{G}{*_1^{1,2}}=\tensor*[^\beta]{G}{*_1^{2,1}}$, otherwise we apply Claim~\ref{clm2:5} and are done.

As~$v_{1,1}$ is locked by~$\beta$, we find that~$v_{1,1}$ has a neighbour~$z$ with $\beta(z)=2$.
Hence, $z$ belongs to~$\tensor*[^\beta]{G}{*_1^{1,2}}$.
Recall that $\alpha(x)=2$.
As~$v_{1,1}$ is locked by~$\alpha$, we find that~$v_{1,1}$ has no other neighbour that got colour~$2$ in the colouring~$\alpha$, by the lock-property.
Note that $\beta(x)=3$.
Hence, in order for~$v_{1,1}$ to have a neighbour coloured~$2$ by~$\beta$, it must be the case that~$z$ belongs to~$H^{2,3}$ and thus $\alpha(z)=3$.
As~$v_{1,2}$ is not in~$H^{2,3}$, we find that $z\neq v_{1,2}$.
Then~$z$ is an internal vertex of $\tensor*[^\beta]{G}{*_1^{1,2}}=\tensor*[^\beta]{G}{*_1^{2,1}}$.
Thus~$z$ has two neighbours coloured~$1$.
If~$z$ is an internal vertex of~$H^{2,3}$, then~$\beta$ colours two other neighbours of~$z$ with colour~$3$.
This implies that~$z$ is superfree, contradicting the fact that~$\tensor*[^\beta]{G}{*_1^{1,2}}$ has no superfree vertices.
Thus~$z$ is the end-vertex of~$H^{2,3}$ other than~$x$, that is, $z=x'$.

We now let~$y$ be the first vertex of~$\tensor*[^\beta]{G}{*_1^{1,2}}$ that is not in~$\tensor*[^\alpha]{G}{*_1^{1,2}}$ when traversing~$\tensor*[^\beta]{G}{*_1^{1,2}}$ from~$v_{1,2}$.
Note that such a vertex~$y$ exists, as~$z$ belongs to~$\tensor*[^\beta]{G}{*_1^{1,2}}$ but not to~$\tensor*[^\alpha]{G}{*_1^{1,2}}$ (as $\alpha(z)=3$).
Thus $\alpha(y) \neq \beta(y)$, so~$y$ belongs to~$H^{2,3}$.
The end-vertices of~$\tensor*[^\beta]{G}{*_1^{1,2}}$ are~$v_{1,1}$ and~$v_{1,2}$, neither of which belong to~$H^{2,3}$.
Hence, $y$ is an inner vertex of~$\tensor*[^\beta]{G}{*_1^{1,2}}$.
If~$y$ is an internal vertex of both~$H^{2,3}$ and~$\tensor*[^\beta]{G}{*_1^{1,2}}$, then~$y$ would be superfree, contradicting the fact that~$\tensor*[^\beta]{G}{*_1^{1,2}}$ has no superfree vertices.
This means that~$y$ is an end-vertex of~$H^{2,3}$.
As $\beta(x)=3$ while $\beta(y)=1$ or $\beta(y)=2$, we find that $y \neq x$.
It follows that $y=z=x'$, and so $\beta(y)=2$.

Consider the vertex~$z'$ on~$\tensor*[^\beta]{G}{*_1^{1,2}}$ reached immediately before $y=z=x'$ when traversing~$\tensor*[^\beta]{G}{*_1^{1,2}}$ from~$v_{1,2}$.
Then $\alpha(z')=\beta(z')=1$.
As $\alpha(v_{1,2})=\alpha(y)=2$, we find that~$z'$ is an inner vertex of~$\tensor*[^\alpha]{G}{*_1^{1,2}}$.
It follows that~$z'$ is a free vertex.
Note that~$z$ is adjacent to~$v_{1,1}$ and~$z'$ and that $\alpha(z)=3$ and $\alpha(v_{1,1})=\alpha(z')=1$.
Therefore~$z'$ is a vertex of~$\tensor*[^\alpha]{G}{*_1^{1,3}}$ and so~$\tensor*[^\alpha]{G}{*_1^{1,3}}$ and~$\tensor*[^\alpha]{G}{*_1^{1,2}}$ intersect on the free vertex~$z'$.
By Claim~\ref{clm2:4}, it follows that~$\tensor*[^\alpha]{G}{*_1^{1,3}}$ not a path whose end-vertices are locked and that does not contain any superfree vertices, and so we are done by applying Claim~\ref{clm2:3}.
This completes the description of the algorithm.

\medskip
\noindent
The correctness of our algorithm follows directly from its description.
Hence it remains to discuss its runtime.

\medskip
\noindent
{\bf Runtime analysis.}
We first compute the set~$L_\alpha$ in~$O(n)$ time, as~$\Delta$ is a constant.
We then apply Claim~\ref{clm2:1} on each~$u_i$.
As this takes~$O(1)$ time per vertex, obtaining the lock-property takes~$O(n)$ in total.
As~$\Delta$ is a constant, for a given pair~$(i,j)$, the vertex~$v_{i,j}$ can be found in~$O(1)$ time.
Moreover, for a given triple~$(i,j,k)$, we can compute~$G_i^{j,k}$ in~$O(n)$ time.
As~$\Delta$ is a constant, we can find a free vertex~$x$ in~$O(n)$ time.

We can find the path~$P$ to a vertex in~$L_\alpha$, which we assumed was~$u_1$, by using a breadth-first search starting from~$x$.
As~$\Delta$ is a constant, this takes~$O(n)$ time.
We may also assume that~$x$ is the free vertex on~$P$ closest to~$u_1$ on this path, as otherwise we can replace~$x$ by some other free vertex of~$P$ in~$O(n)$ time.

We check in~$O(n)$ time whether~$G_1^{1,2}$ is a path whose end-vertices are locked and that does not contain any superfree vertices.
We check the same in~$O(n)$ time for~$G_1^{2,1}$.
If we find that for at least one of these graphs this is not the case, then our algorithm applies Claim~\ref{clm2:3} (either with $i=j=1$, $k=2$ or with $i=k=1$, $j=2$), and we are done in~$O(n)$ time.
So suppose this is the case for both~$G_1^{1,2}$ and~$G_1^{2,1}$.
Then we check in~$O(n)$ time whether $G_1^{1,2}=G_1^{2,1}$.
If not, then our algorithm applies Claim~\ref{clm2:5} (with $i=j=1$, $k=2$), and we are done in~$O(n)$ time.
So suppose that $G_1^{1,2}=G_1^{2,1}$.

Recolouring the vertex~$w$ in $H^{2,3}$ (if it exists) and applying Claim~\ref{clm2:2} (either with $j=3$, $k=2$ or with $j=2$, $k=3$) takes~$O(n)$ time.
So far we have used~$O(n)$ time.
Hence, it takes~$O(n)$ time to do the similar procedure for the colouring~$\beta$ obtained after applying Claim~\ref{clm2:2}.
We find the vertices~$z$ and~$z'$ in~$O(1)$ time.
Afterwards we apply Claim~\ref{clm2:3}, which takes~$O(n)$ time.
So we used~$O(n)$ time in total.
This completes the proof of the lemma.\qedllncs
\end{proof}

We are now ready to prove Proposition~\ref{prop1} for graphs~$G$ with maximum degree~$\Delta\geq 3$ by following the arguments from~\cite{FJP16} without the requirement that~$G$ is $(\Delta-1)$-degenerate (that is, we allow~$G$ to be $\Delta$-regular).
So the proof is similar to the proof in~\cite{FJP16} for $(\Delta-1)$-degenerate graphs except that we use Theorem~\ref{thm:k2-deg} instead of its algorithmic counterpart for $(\Delta-1)$-degenerate graphs.
To show this we give a self-contained proof.

\medskip
\noindent
{\bf Proposition~\ref{prop1} [restated].}
{\em Let~$G$ be a connected graph on~$n$ vertices with maximum degree $\Delta \geq 3$.
Then it is possible to find a path in~$R_{\Delta+1}(G)$ (if one exists) between any two given $(\Delta+1)$-colourings in~$O(n^2)$ time.}

\begin{proof}
We use induction on~$\Delta$.
If $\Delta \in \{1,2\}$, then the statement is trivially true.
Let $\Delta\geq 3$ and assume that we have an~$O(n^2)$ time algorithm for connected graphs on~$n$ vertices with maximum degree~$\Delta-1$.
Applying Lemma~\ref{lem:p-1}~$O(n)$ times, in~$O(n^2)$ time we can find a path from~$\alpha$ to some $\Delta$-colouring~$\gamma_1$ and a path from~$\beta$ to some $\Delta$-colouring~$\gamma_2$.
By Theorem~\ref{thm:k2-deg} we can find in~$O(n^2)$ time a partition $\{S_1, S_2\}$ of~$V(G)$ such that~$S_1$ is an independent set and~$S_2$ induces a $(\Delta-2)$-degenerate graph, which we denote by~$H$.
We modify the pair $(S_1,S_2)$ in~$O(n^2)$ time by moving vertices from~$S_2$ to~$S_1$ until~$S_1$ is a maximal independent set.

Let~$\gamma_1^H$ and~$\gamma_2^H$ be the $\Delta$-colourings of~$H$ that are the restrictions of~$\gamma_1$ and~$\gamma_2$, respectively, to~$S_2$.
Let~$\gamma_1'$ and~$\gamma_2'$ be the $(\Delta+1)$-colourings obtained from~$\gamma_1$ and~$\gamma_2$, respectively, by recolouring every vertex in~$S_1$ with the colour $\Delta+1$.
As~$S_1$ is maximal, $H$ has maximum degree at most $\Delta-1$.
We apply the induction hypothesis to find in~$O(n^2)$ time a path between the two $\Delta$-colourings~$\gamma_1^H$ and~$\gamma_2^H$ in~$R_{\Delta}(H)$ (note that neither is an isolated vertex of~$R_{\Delta}(H)$ since~$H$ is $(\Delta-2)$-degenerate).
Note that this immediately translates into a path between~$\gamma_1'$ and~$\gamma_2'$ in~$R_{\Delta+1}(G)$.
Hence we obtain a path between~$\alpha$ and~$\beta$ in~$R_{\Delta+1}(G)$.
This completes the proof.\qedllncs
\end{proof}

\section{Future Work}\label{s-con}

In this section we pose two open problems.
We have proven that for every integer $k\geq 3$, the problem of finding a $k$-degenerate decomposition is polynomial-time solvable on graphs of maximum degree~$k$ and \NP-hard for graphs of maximum degree~$2k-\nobreak 2$ (by generalizing the hardness proof of~\cite{YY06} for $k=3$).
This brings us to our first open problem.

\begin{open}
Determine, for every integer $k\geq 4$, the complexity of finding a $k$-degenerate decomposition for graphs of maximum degree~$k+\nobreak 1$.
\end{open}

Our second open problem is related to Theorem~\ref{t-ma07}.
Recall that this theorem states that for every three integers $k\geq 3$ and $p,q\geq 0$ with $p+q=k-2$, the vertex set of every connected graph of maximum degree~$k$ that is not isomorphic to~$K_{k+1}$ can be partitioned into two sets~$A$ and~$B$, where~$A$ induces a $p$-degenerate subgraph of maximum size and~$B$ induces a $q$-degenerate subgraph.
Our algorithms in Sections~\ref{s-3} and~\ref{s-maxdegree} form an algorithmic version of Theorem~\ref{t-special}, which is a special case of Theorem~\ref{t-ma07} in which $p=1$ and $q=k-1$.

\begin{open}
Does there exist an algorithmic version of Theorem~\ref{t-ma07} similar to our algorithmic version of Theorem~\ref{t-special}?
\end{open}

\bibliography{mybib}
\end{document}